\def\apxmark{$\!\!${\bf(*)}}
\def\cwp{\mathop{cw}}
\def\ca#1{{\cal#1}}
\def\cf#1{{\EuScript#1}}
\def\ordtype{\mathop{\mbox{\sl ordccw}}}
\def\prebox#1{\mathop{\mbox{\sl #1}}}
\begin{document}

\title{Clique-Width of Point Configurations%
 \thanks{O.~\c{C}a\u{g}{\i}r{\i}c{\i}, P.~Hlin\v{e}n\'y and F.~Pokr\'yvka 
	have been supported by the {Czech Science Foundation}, project no.~{20-04567S}.
	A.~Sankaran has been supported by the Leverhulme Trust through a
	Research Project Grant on `Logical Fractals'.}
}
%%%%%%%%%%%%%%%%%%%%%%%%%%%%%%%%%%%%%%%%%%%%%%

\author{Onur \c{C}a\u{g}{\i}r{\i}c{\i}\inst1\,\orcidID{0000-0002-4785-7496}
	\and
	Petr~Hlin\v en\'y\inst1\,\orcidID{0000-0003-2125-1514}
	\and
	Filip~Pokr\'yvka\inst1%\,\orcidID{0000-XXXX}
	\and
	Abhisekh Sankaran\inst2%\,\orcidID{0000-XXXX}
}

\authorrunning{O.~\c{C}a\u{g}{\i}r{\i}c{\i}, P.~Hlin\v{e}n\'y
	F.~Pokr\'yvka and A.~Sankaran}

\institute{Faculty of Informatics, Masaryk University, Brno, Czech Republic
	\\\email{onur@mail.muni.cz}\quad\email{hlineny@fi.muni.cz}\quad
		\email{xpokryvk@fi.muni.cz}
	\medskip\and
	Department of Computer Science and Technology, University of
	Cambridge, UK \\\email{abhisekh.sankaran@cl.cam.ac.uk}}

\maketitle

%TODO mandatory: add short abstract of the document
\begin{abstract}
While structural width parameters (of the input) belong 
to the standard toolbox of graph algorithms,
it is not the usual case in computational geometry.
As a case study we propose a natural extension of the structural graph
parameter of {\em clique-width} to geometric point configurations
represented by their {\em order type}.
We study basic properties of this clique-width notion,
and relate it to the monadic second-order logic of point configurations.
As an application, we provide several linear FPT time algorithms for
geometric point problems which are NP-hard in general,
in the special case that the input point set is of bounded clique-width
and the clique-width expression is also given.

\keywords{point configuration \and order type \and fixed-parameter
	tractability \and relational structure \and clique-width}
\end{abstract}

\section{Introduction}\label{introduction}
%%%%%%%%%%%%%%%%%%%%%%%%%%%%%%%%%%%%%%%%%%%%

An order type is a useful means to characterize the combinatorial properties
of a finite point configuration in the plane.
As introduced in Goodman and
Pollack~\cite{DBLP:journals/siamcomp/GoodmanP83,DBLP:journals/dcg/GoodmanP86},
the {\em order type} of a given set $P$ of points assigns, to each ordered triple
$(a,b,c)\in P^3$ of points, the orientation (either clockwise or
counter-clockwise) of the triangle $abc$ in the plane.
More generally, if the point set $P$ is not in a~general position, the triple
$(a,b,c)$ may also be collinear (as the natural third option).

Knowing the order type of a point set $P$ is sufficient to determine some
useful combinatorial properties of the geometric set $P$, such as
the convex hull of~$P$ and other.
For example, problems of finding convex holes in $P$ or dealing with the
intersection pattern of straight line segments with ends in~$P$,
can be solved by looking only at the order type of $P$ and not on its
geometric properties.
That is why order types of points sets are commonly studied
from various perspectives in the field of computational geometry, e.g., 
\cite{DBLP:conf/stoc/GoodmanPS89,DBLP:journals/order/AichholzerAK02,%
DBLP:journals/comgeo/AichholzerK07,%
DBLP:conf/soda/AloupisILOW14,DBLP:conf/compgeom/GoaocHVSV15,%
DBLP:journals/ijcga/Roy16,%
DBLP:journals/ijcga/AichholzerKMPW17,DBLP:conf/gd/AichholzerB0KMP19}.

On the other hand, knowing the order type of $P$ is obviously not sufficient to
answer questions involving ``truly geometric'' aspects of $P$, e.g., 
distances in~$P$ (straight-line or geodesic), or 
angles between the lines or the area of polygons within~$P$.
Nevertheless, even in such geometry-based problems, a more efficient
subroutine computing with the order type of $P$ might speed-up the overall
computation, which can be a promising direction for future research.

Unlike in the area of graphs and graph algorithms, where structural 
width parameters are very common for many years, at least since the 90's, 
no similar effort can be seen in combinatorial and computational geometry.
We would like to introduce, in this paper, possible combinatorial handling 
of ``structural complexity'' of a given point configuration~$P$
through defining its ``width'' (which we would assume to be small for the
studied inputs).
Note that, although the desired width would be a concrete natural number, we
will not be interested in the exact value of it, but instead study whether
the width would be bounded or unbounded on a given class of point
configurations.

Inspired by graph structure parameters, the obvious first attempt could be
to extend the traditional notion of {\em tree-width}~\cite{gmii}.
Such an extension is technically possible (cf.~tree-width of the Gaifman graph
of a relational structure), but the huge problem is that for the tree-width
to be upper-bounded, the underlying structure must be ``sparse'' -- in
particular, it can only have a linear number of edges\,/\,tuples.
This is clearly not satisfied for the order type in which
about half of all triples are of each orientation.

A better option comes with another traditional, but not so well-known,
notion of {\em clique-width}~\cite{CourcelleER1991}.
Clique-width can be bounded even on dense graphs, such as on cliques,
and, similarly to the case of Courcelle's theorem~\cite{cou90} for tree-width, 
clique-width also enjoys some nice metaalgorithmic properties,
e.g.~\cite{cmr00,DBLP:journals/dam/GanianH10}.
This includes solving any decision (and some optimization as well) problems
formulated in the monadic second-order (MSO) logic in linear time.
Hence, alongside with the (Section~\ref{sec:ordertypes}) 
proposed definition of the clique-width of point configurations, 
we will introduce the MSO language of their order types
and discuss which problems can be formulated in this language
(and hence solved in linear time if a point set with a
decomposition of bounded clique-width is given on the input).

\vspace*{-3ex}%
\subsubsection{Paper organization. }
We introduce order types of point configurations, viewed as ternary
relational structures, in Section~\ref{sec:ordertypes}.
Then we formally define their clique-width as that of relational structures.
We also show why a technically simpler-looking ``unary'' clique-width
(which is closer to the traditional graph clique-width) does not
work well for order types.

In Section~\ref{sec:MSO} we speak about MSO logic of order types, and give
a basic overview of its use and expressive power.
We restate classical logic results on clique-width characterization
and metaalgorithmics (Theorems~\ref{thm:cwtransdu} and~\ref{thm:MSOlin}).
We continue the study in Section~\ref{sec:assorted} with a few concrete
interesting examples of bounding the clique-width of special point sets
(Theorem~\ref{thm:cwbasic2}),
and of solving some geometric point problems, which are otherwise NP-hard,
for inputs of bounded clique-width
(Theorems~\ref{thm:genposs}, \ref{thm:convparti} and~\ref{thm:terraing}).

We conclude with some questions and suggestions %of future research
in Section~\ref{sec:conclu}.
Due to space restrictions, details of the ~\apxmark-marked statements are
left for the Appendix.

\section{Order Types and Clique-Width}
%%%%%%%%%%%%%%%%%%%%%%%%%%%%%%%%%%%%%%%%%%%%
\label{sec:ordertypes}

We now recall the notion of an order type in a formal setting,
and propose a definition of the clique-width of (the order type~of) 
a point configuration, based on a natural specialization of 
the very general concept of clique-width of relational structures.
A {\em relational structure} $S=(U,\>R_1^{S},\dots,R_a^{S})$
of the signature $\sigma=\{R_1,\dots,R_a\}$ consists of a
universe (a finite set) $U$ and a (finite) list of relations
$R_1^{S},\dots,R_a^{S}$ over~$U$.
For instance, for graphs, $U=V(G)$ is the vertex set and $R_1^{G}=E(G)$ 
is the binary symmetric relation of edges of~$G$.

For a set of points $P$, here {\em always} considered in the plane, consider a map
$\omega:P^3\to\{+,-,0\}$ where $\omega(a,b,c)=0$ if the triple of points%
\footnote{Note that if any two of $a,b,c$ are not distinct, then
we automatically get $\omega(a,b,c)=0$, and so when we then shift to seeing an
order type as a ternary relation, the involved triples would always consist
of distinct elements (which is technically nice).}
$a,b,c$ is collinear, $\omega(a,b,c)=+$ if $abc$ forms a counter-clockwise
oriented triangle, and $\omega(a,b,c)=-$ otherwise.
Then $\omega$ is traditionally called the {order type} of $P$, but we,
for technical reasons, prefer defining the {\em order type} of $P$ as the
ternary relation $\Omega\subseteq P^3$ such that 
$(a,b,c)\in\Omega$ iff $\omega(a,b,c)=+$.
Hence we have formally got a relational structure $(P,\Omega)$ of the
signature consisting of one ternary symbol.
We will also write $\Omega(P)$ to emphasize that $\Omega$ is the
order type of the point set~$P$.

Observe that $\omega(a,b,c)=\!-\,$ iff $(b,a,c)\in\Omega$,
and $\omega(a,b,c)=0$ iff $(a,b,c),(b,a,c)\not\in\Omega$. Hence,
the relation $\Omega$ fully determines the usual order type of~$P$. 
Furthermore, whenever $(a,b,c)\in\Omega$, we also have
$(b,c,a)\in\Omega$ and $(c,a,b)\in\Omega$,
and so we call the set of triples $\{(a,b,c),\, (b,c,a),\, (c,a,b)\}$
the {\em cyclic closure} of $(a,b,c)\in\Omega$.

\paragraph{Unary clique-width. }
We start with the definition of ordinary graph clique-width.
Let a {$\ell$-expression} be an algebraic expression
using the following four operations on vertex-labelled graphs using $\ell$ labels:
\begin{enumerate}[(u1)]\vspace*{-1ex}
\item create a new vertex with single label $i$;
\item\label{it:uunion} take the disjoint union of two labelled graphs;
\item\label{it:labeledge}
add all edges between the vertices of label $i$ and label $j$ ($i\not=j$); and
\item relabel all vertices with label $i$ to label $j$.
\end{enumerate}\vspace*{-1ex} 
The {\em clique-width} ${\rm cw}(G)$ of a graph $G$ equals the minimum $\ell$ such that
(some labelling of) $G$ is the value of an $\ell$-expression.

The idea behind this definition is that the edge set of a graph $G$ can be constructed
with ``bounded amount of information''; this is since we have only a fixed
number of distinct labels and vertices of the same label are further
indistinguishable by the expression.

This definition has an immediate generalization to the {\em unary clique-width}
of an order type $\Omega(P)$ of a point set $P$ (the adjective referring to the 
fact that labels occur as unary predicates in the definition):
replace (u\ref{it:labeledge}) with
\begin{description}\vspace*{-1ex}\item[\rm(u3')]
add to $\Omega$ the cyclic closures of all triples $(a,b,c)$ of distinct
elements such that $a$ is labelled $i$, $b$ is labelled $j$ and $c$ is labelled $k$.
\end{description}\vspace*{-1ex}
Unfortunately, although being very simple, this definition is generally not
satisfactory due to problems discussed, e.g.,
in~\cite{DBLP:journals/corr/abs-0806-0103}
and specifically illustrated for order types in our Proposition~\ref{prop:unaryvs}.

\paragraph{Multi-ary clique-width. }
While in the case of graphs (whose edge relation is binary) it is sufficient
to consider clique-width expressions with unary labels,
for the ternary order-type relation (as well as for other relational
structures of higher arity) it is generally necessary 
% (again, see Proposition~\ref{prop:unaryvs})
to allow creation of ``intermediate'' binary labels, which are labelled
pairs of points of $P$.

This generalization, which is in agreement with with the treatment by 
Blumensath and Courcelle~\cite{DBLP:journals/iandc/BlumensathC06},
leads to the proposed new definition:

\begin{definition}[Clique-width of a point configuration] 
\label{def:cwpoints}\rm
Consider an algebraic expression $\cal E$ using the following five operations 
on labelled relational structures (of arity $3$ in this case) over point sets:
\begin{enumerate}[(w1)]\vspace*{-1ex}
\item\label{it:wcreate} create a new point with single label $i$;
\item\label{it:wdisju} take the disjoint union of two point sets;
\item\label{it:wpairl} for every two points, point $a$ of label $i$ and point $b$ of label $j$
($i\not=j$), give the ordered pair $(a,b)$ binary label $k$;%
\footnote{After this operation, $(a,b)$ may hold more than one binary label, which is ok.}
% (keep also possible previous label of~$(a,b)$);
% for two labels $i\not=j$, give every pair of points $(a,b)$ 
% add all edges between the vertices of label $i$ and label $j$ ($i\not=j$); and
\item\label{it:wtriple} for every three pairwise distinct points,
$a$, $b$ and $c$ such that $c$ is of (unary) label $i$,
and the pair $(a,b)$ is of (binary) label $k$,
add to the structure the cyclic closure of the ordered triple $(a,b,c)$,
\item[(w\ref{it:wtriple}')$\!$]
under the same conditions as in (w\ref{it:wtriple}), add the cyclic closure of $(b,a,c)$,
\item\label{it:wrelab} 
relabel all tuples (singletons or pairs) with label $i$ to label $j$ of equal arity.
\end{enumerate}\vspace*{-1ex} 
The {\em value} of such expression $\cal E$ is the ternary relational
structure on the points created by (w\ref{it:wcreate}) and consisting of the
triples added by (w\ref{it:wtriple}) and (w\ref{it:wtriple}').
The auxiliary labels introduced in $\cal E$ are no longer relevant after the
evaluation of $\cal E$.

The {\em width} of an expression $\cal E$ constructed as in (w1)--(w5) equals the sum
of arities of the labels occuring in $\cal E$.%
\footnote{Note that this `sum of arities' measure directly generalizes the
number $\ell$ of unary labels in the expression of (u1)--(u4).}
The {\em clique-width} $\cwp(P)$ of a point configuration $P$
equals the minimum $\ell$ such that the order type $\Omega(P)$ of $P$
is the value of an expression of width at most~$\ell$.
\end{definition}

\begin{remark}\label{rem:spacelinear}
Notice that Definition~\ref{def:cwpoints} does not address the question 
of realizability of the relational structure of $\cal E$ as an order type.
% of point realizability of the abstract order type which is the value of $\cal E$.
This is formally right since we compare the value of
constructed $\cal E$ to the order type of an existing point~set.
\end{remark}

\begin{remark}
There is one side effect of Definition~\ref{def:cwpoints} which also deserves attention.
To give an order type $\Omega(P)$ as a ternary relation, in general, 
one needs cubic space to list the counter-clockwise triples.
On the other hand, assuming bounded clique-width of~$P$, the expressions
evaluating to $\Omega(P)$ is of linear size
which is much smaller (and comparable to listing the point coordinates).
This would admit the possibility of linear-time algorithms with order types.
\end{remark}

For a closer explanation of this concept, we present a basic example:
\begin{proposition}\apxmark\label{prop:unaryvs}
Let $P$ be an arbitrary finite set of points in a strictly convex position.%
\footnote{That is, in the convex hull of $P$ every point of $P$ is a vertex.}
Then the clique-width of $P$ is bounded by a constant, while the unary
clique-width of $P$ is unbounded.
\end{proposition}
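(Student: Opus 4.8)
The plan is to prove the two halves separately: an upper bound for the (multi-ary) clique-width of a convex point set, and a lower bound (unboundedness) for the unary clique-width.

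We prove the two bounds separately; the upper bound on $\cwp(P)$ is a direct incremental construction, while the unboundedness of the unary clique-width is the main obstacle and needs a cut argument on the parse tree. Throughout I use the basic fact that if the points of $P$ are listed as $p_1,\dots,p_n$ in counter-clockwise order along the convex hull, then a triple $(p_i,p_j,p_k)$ of distinct points lies in $\Omega(P)$ exactly when $i,j,k$ occur in cyclic increasing order; in particular $(p_i,p_j,p_k)\in\Omega$ whenever $i<j<k$. Hence $\Omega(P)$ depends only on the linear order of the indices.

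For the upper bound I would process $p_1,\dots,p_n$ in this order, maintaining one unary label $A$ on all already–placed points, a second unary label $B$ reserved for the point currently being added, and one binary label $K$ carried by exactly the ordered pairs $(p_i,p_j)$ with $i<j$ placed so far. To insert $p_{m+1}$: create it with label $B$ by (w\ref{it:wcreate}) and take the disjoint union by (w\ref{it:wdisju}); apply (w\ref{it:wtriple}) with the pair-label $K$ and the unary label $B$, which adds the cyclic closure of $(p_i,p_j,p_{m+1})$ for every $i<j\le m$, and these are precisely the new $\Omega$-triples since $m+1$ is the largest index; then apply (w\ref{it:wpairl}) on labels $A,B$ to give each pair $(p_i,p_{m+1})$ the label $K$, restoring the invariant on $K$; finally relabel $B$ to $A$ by (w\ref{it:wrelab}). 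The labels used have total arity $1+1+2=4$, so $\cwp(P)\le 4$, a constant.

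For the lower bound, fix any unary expression and choose, by the standard balanced-separator argument on its parse tree, a subexpression whose value is a point set $Q$ with $n/3\le |Q|\le 2n/3$; put $R=P\setminus Q\neq\emptyset$. The key observation is that two points $p,p'\in Q$ carrying the same label at this node keep the same label ever after, since relabellings act uniformly; hence every subsequent application of (u3') treats them symmetrically. Consequently, for any triple whose remaining two entries are not both in $Q$ — so that it is created only after $Q$ is combined with a point of $R$ — swapping $p$ and $p'$ preserves membership in $\Omega$. Equivalently, $\omega(p,y,z)=\omega(p',y,z)$ for all $y,z$ of which at least one lies in $R$; geometrically, no chord with an endpoint in $R$ separates $p$ from $p'$.

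The heart of the argument is to turn this into a bound on the number of labels. I claim that if $p_s,p_t\in Q$ are \emph{not} consecutive on the convex hull, then both open arcs between them are nonempty and, since $R\neq\emptyset$, at least one arc contains a point of $R$; so a chord with an $R$-endpoint separates them and they must receive distinct labels. Thus any pairwise non-consecutive subset of $Q$ needs pairwise distinct labels. As the consecutiveness relation on $Q$ is a disjoint union of paths, $Q$ contains such a subset of size at least $|Q|/2\ge n/6$, forcing at least $n/6$ unary labels, which grows without bound. The main obstacle is exactly this step: correctly isolating which crossing triples are constrained by equal labels and ruling out the genuinely non-separable case of adjacent hull vertices, which is why the construction of the distinguishing chord must always place an endpoint in $R$.
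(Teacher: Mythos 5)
Your proof is correct. The upper bound is essentially the paper's own construction: the same incremental counter-clockwise insertion with one unary label for the settled points, one for the newcomer, and one binary label marking the index-increasing pairs, for total width $1+1+2=4$ (you apply the triple-creation step (w4) before refreshing the binary label with (w3), the paper does it the other way round; both orders preserve the invariant). For the lower bound the paper takes a shorter route than yours: it looks only at the \emph{last} union operation of the expression, pigeonholes two points $a,b$ of equal label on the side containing at least $\ell+1$ points, picks \emph{any} point $c$ from the other side, and observes that any application of (u3') that creates $(a,b,c)$ must, because $a$ and $b$ carry the same label, also create $(b,a,c)$ --- yet exactly one of these two triples is counter-clockwise since no three points of a strictly convex set are collinear. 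This transposition argument needs no further geometry and yields about $n/2$ labels. Your argument instead chooses a balanced node of the parse tree and runs a substitution argument (replacing $p_s$ by $p_t$ in a triple whose other two entries are not both in $Q$), which obliges you to construct a separating chord with an endpoint in $R$ and to count an independent set in the consecutiveness graph, arriving at about $n/6$ labels. Both are sound and both rest on the same core fact that equally labelled points are indistinguishable to (u3'); yours is more geometric and somewhat more laborious, while the paper's exploits the orientation reversal under transposing the two equally labelled points, which works for every choice of the third point and so avoids any separator or chord construction.
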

%\vspace*{-1ex}
\paragraph{Proof outline. }
Let the points of $P$ be $p_1,p_2,\ldots,p_n$ in   
the counter-clockwise order (starting arbitrarily). 
We start with $p_1$ and stepwise add $p_2$, $p_3$ etc.,
changing previous points to label $1$ and the added point created with unique label~$2$.
See Figure~\ref{fig:convex-cw}.
Along the steps, right after the creation of $p_j$, we add the binary label
$3$ to all pairs labelled $1$ and $2$, i.e., to $(p_i,p_j)$ for all $i<j$, 
and create the order triple $(p_i,p_{i'},p_j)$ over all pairs $(p_i,p_{i'})$ 
of label~$3$, $i,i'\not=j$ and $p_j$ of label~$2$.
This construction witnesses that the clique-width of $P$ is at most~$4$.

On the other hand, take unary clique-width with $\ell$ labels, 
and $|P|\geq 2\ell+1$.
An arbitrary $\ell$-expression for $\Omega(P)$ must have a union
operation (the ``last'' one) over two subsets such that one has more than
$\ell$ points, and so two points $a,b$ of the same label by the pigeon-hole
principle.
Let $c$ be any point from the other set.
Then there is no way, based on the labels, to distinguish between the
triples $(a,b,c)$ and $(b,a,c)$, which must have the opposite orientations
in~$\Omega(P)$. Therefore, the clique-with of $P$ must be at least $\ell+1$.

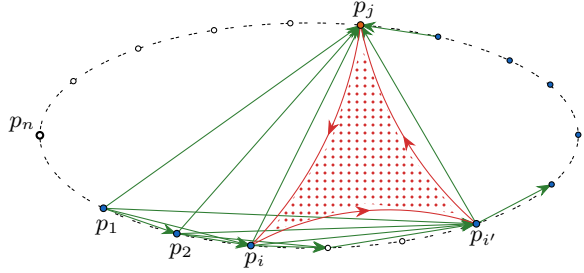
\begin{figure}[t]
\definecolor{dtsfsf}{rgb}{0.8274509803921568,0.1843137254901961,0.1843137254901961}
\definecolor{sexdts}{rgb}{0.1803921568627451,0.49019607843137253,0.19607843137254902}
\definecolor{rvwvcq}{rgb}{0.08235294117647059,0.396078431372549,0.7529411764705882}
\definecolor{dbwrru}{rgb}{0.8588235294117647,0.3803921568627451,0.0784313725490196}
$$
\begin{tikzpicture}[scale=0.5, line cap=round,line join=round, >={Stealth[scale=1.2]} ,x=1cm,y=1cm]
\path [use as bounding box] (-13.5,-3.8) rectangle (15,1.4);
% \clip(-8.93194479262301,-10.419615515628205) rectangle (7.586318201831179,4.84726391864006);
\fill[line width=0pt,dotted,color=dtsfsf,fill=dtsfsf,pattern=dots,pattern color=dtsfsf] (-2.4948287498119,-3.5270618362800907) -- (-0.005408729963423409,-2.807032191163838) -- (2.4278370815092054,-3.0712131649808665) -- (0.9817938564054718,-1.2636591336011993) -- (-0.10273856242232857,1.3364378192295527) -- (-0.9369942692129442,-1.1802335629221377) -- cycle;
\draw [rotate around={0:(-1.5,-1)},line width=0.4pt,dash pattern=on 1pt off 1pt on 1pt off 4pt] (-1.5,-1) ellipse (7.1575008039952195cm and 2.9966344053274447cm);
\draw [->,color=sexdts] (-6.966963666482887,-2.93415464949699) -- (-5.0158311700965905,-3.610193099365358);
\draw [->,color=sexdts] (-5.0158311700965905,-3.610193099365358) -- (-3.0480865026272097,-3.9257024297425205);
\draw [->,color=sexdts] (-5.0158311700965905,-3.610193099365358) -- (2.948509209473488,-3.3475651581673125);
\draw [->,color=sexdts] (-3.0480865026272097,-3.9257024297425205) -- (-1.0002988723446915,-3.9893224793422504);
\draw [->,color=sexdts] (-5.0158311700965905,-3.610193099365358) -- (-1.0002988723446915,-3.9893224793422504);
\draw [->,color=sexdts] (-6.966963666482887,-2.93415464949699) -- (2.948509209473488,-3.3475651581673125);
\draw [->,color=sexdts] (-3.0480865026272097,-3.9257024297425205) -- (2.948509209473488,-3.3475651581673125);
\draw [->,color=sexdts] (-6.966963666482887,-2.93415464949699) -- (-0.13054708598201437,1.9412732066527454);
\draw [->,color=sexdts] (-5.0158311700965905,-3.610193099365358) -- (-0.13054708598201437,1.941273206652745);
\draw [->,color=sexdts] (-3.0480865026272097,-3.9257024297425205) -- (-0.13054708598201437,1.941273206652745);
\draw [->,color=sexdts] (2.948509209473488,-3.3475651581673125) -- (-0.13054708598201437,1.9412732066527454);
\draw [->,color=sexdts] (-6.966963666482887,-2.93415464949699) -- (-3.0480865026272097,-3.9257024297425205);
\draw [->,color=sexdts] (-1.0002988723446915,-3.9893224793422504) -- (2.948509209473488,-3.3475651581673125);
\draw [->,color=sexdts] (2.948509209473488,-3.3475651581673125) -- (4.949471505683016,-2.2995072928454503);
\draw [shift={(-10.34370768766397,3.3611834734331034)},color=dtsfsf]  plot[domain=5.498386168405508:6.1450432904006735,variable=\t]({1*10.311391498796697*cos(\t r)+0*10.311391498796697*sin(\t r)},{0*10.311391498796697*cos(\t r)+1*10.311391498796697*sin(\t r)});
\draw [shift={(0.611346984902906,-10.49411088074886)},color=dtsfsf]  plot[domain=1.2547259984909083:2.079094361962844,variable=\t]({1*7.519005454773145*cos(\t r)+0*7.519005454773145*sin(\t r)},{0*7.519005454773145*cos(\t r)+1*7.519005454773145*sin(\t r)});
\draw [shift={(9.23611770712632,3.8536569932382543)},color=dtsfsf]  plot[domain=3.3429937578163544:3.9946186522884526,variable=\t]({1*9.559896500153666*cos(\t r)+0*9.559896500153666*sin(\t r)},{0*9.559896500153666*cos(\t r)+1*9.559896500153666*sin(\t r)});
\draw [->,color=dtsfsf] (-0.15774902693414428,-3.0145430888584315) -- (0.1351107979547299,-2.9902024242743686);
\draw [->,color=dtsfsf] (1.125468322368782,-1.2068753553573428) -- (1,-1);
\draw [->,color=dtsfsf] (-0.8532798464287286,-0.6707607265483104) -- (-1,-1);
\draw [->,color=sexdts] (1.930362456051306,1.630051633909205) -- (-0.13054708598201414,1.9412732066527454);
\draw [fill=dbwrru] (-0.1305470859820142,1.9412732066527454) circle (2.5pt);
\draw (-0.019312991743267005,2.3402148349193864) node {$p_j$};
\draw [fill=rvwvcq] (-6.966963666482887,-2.93415464949699) circle (2.5pt);
\draw (-6.8602097874263155,-3.440181049805715) node {$p_1$};
\draw [fill=rvwvcq] (-5.0158311700965905,-3.610193099365358) circle (2.5pt);
\draw (-4.8997088764683685,-4.075856152382075) node {$p_2$};
\draw [fill=rvwvcq] (-3.0480865026272097,-3.9257024297425205) circle (2.5pt);
\draw (-2.9392079655104215,-4.3273836361746103) node {$p_i$};
\draw [color=black,fill=white] (-1.0002988723446915,-3.9893224793422504) circle (2pt);
\draw [color=black,fill=white] (0.9769464781854367,-3.811476148864313) circle (2pt);
\draw [fill=rvwvcq] (2.948509209473488,-3.3475651581673125) circle (2.5pt);
\draw (3.1535288616021674,-3.7434046414211795) node {$p_{i'}$};
\draw [color=black,fill=rvwvcq] (4.949471505683016,-2.2995072928454503) circle (2pt);
\draw [color=black,fill=rvwvcq] (5.657321614043455,-0.9787957866810123) circle (2pt);
\draw [color=black,fill=rvwvcq] (4.913169872205855,0.3561873637505762) circle (2pt);
\draw [color=black,fill=rvwvcq] (3.8336077751881827,0.9983544845594516) circle (2pt);
\draw [color=black,fill=rvwvcq] (1.930362456051306,1.630051633909205) circle (2pt);
\draw [color=black,fill=white] (-1.9997011276552858,1.9893224793422513) circle (2pt);
\draw [color=black,fill=white] (-4.026785812379924,1.803691333878541) circle (2pt);
\draw [color=black,fill=white] (-6.043437868702751,1.3154783637099658) circle (2pt);
\draw [color=black,fill=white] (-7.766333987623466,0.44807200858614693) circle (2pt);
\draw [fill=black,fill=white,thick] (-8.657480256192192,-0.9928195701802587) circle (2.5pt);
\draw[color=black] (-9.1262546278739,-0.6935844006276118) node {$p_n$};
\end{tikzpicture}
$$
\caption{An illustration of the expression (width~$4$) in
Proposition~\ref{prop:unaryvs}. Unary labels $1$ are blue (on
$p_1,\ldots,p_{j-1}$), the unique label $2$ is orange (on $p_j$ just added),
and the binary labels $3$ are with green arrows.
We are just creating the red triple(s) $(p_i,p_{i'},p_j)$.}
\label{fig:convex-cw}
\end{figure}

\vspace*{-1ex}%
\paragraph{Annotated point configurations. }
In some situations, it may be useful to consider a point configuration $P$
with additional information (or structure) on the points or selected pairs of them.
%For instance, if one want to annotate some points of $P$, then some
%of the auxiliary labels in Definition~\ref{def:cwpoints} may be used for
%that (i.e., kept after evaluating the expression).
%One may similarly use the binary labels from the expression for annotating
%an additinal binary relation on the point set, e.g., as edges of an
%auxiliary graph.
An exemplary use case for such annotations is to study polygons, with $P$ as the vertex set,
for which case we are considering an order type $\Omega(P)$ together with a directed Hamiltonian
cycle on $P$ representing the counter-clockwise boundary~of~$P$.

Formally, we simply consider relational structures (over $P$)
with the signature consisting of
the ternary order type and arbitrary binary or unary symbols.
The {\em clique-width} of such an {\em annotated point configuration} $P$
is, naturaly, as in Definition~\ref{def:cwpoints} with additional rules that
some of the auxiliary unary and binary labels are at the end turned into the
desired unary and binary relations on~$P$.

% \cite{DBLP:journals/iandc/BlumensathC06,DBLP:journals/apal/Blumensath06}
% book~\cite{DBLP:books/daglib/0030804}

\section{MSO logic of order types}
%%%%%%%%%%%%%%%%%%%%%%%%%%%%%%%%%%%%%%%%%%%%
\label{sec:MSO}

The beginning of this section is devoted to a short introduction of the 
{\em monadic second-order (MSO) logic} of relational structures.
Recall a {\em relational structure} $S=(U,\>R_1^{S},\dots,R_q^{S})$
of the signature $\sigma=\{R_1,\dots,R_q\}$.

The language of MSO logic (of the signature~$\sigma$) then consists
of the standard propositional logic,
quantifiers $\forall,\exists$ ranging over elements and subsets of the
universe $U$,
and the relational symbols $R_1,\dots,R_q$ with the following meaning:
for $R_i$ of arity $a$, we have $S\models R_i(x_1,\ldots,x_a)$
if and only if $(x_1,\ldots,x_a)\in R_i^{S}$.

In our specific case of order types $\Omega(P)$ of point sets~$P$,
we use the relational symbol $\ordtype(x_1,x_2,x_3)$ for $\Omega$ within MSO logic.
For example, we can express that a point $y$ lies strictly in the convex
hull of points $x_1,x_2,x_3$ as follows
\begin{eqnarray}\label{eq:triangle}
 \left[\ordtype(x_1,x_2,x_3)\wedge\ \bigwedge\nolimits_{i=1,2,3}
	\ordtype(x_i,x_{i+1},y)\right] \vee 
\\\nonumber \vee
 \left[\ordtype(x_3,x_2,x_1)\wedge \bigwedge\nolimits_{i=1,2,3}
	\ordtype(x_{i+1},x_i,y)\right]
,\end{eqnarray}
where $x_4$ is taken as~$x_1$.
% where indices are taken modulo~$3$.

More generally, we can express that a point $y\in P$ belongs to the convex hull
(not necessarily strictly now) of a set $X\subset P$, $y\not\in X$,
with the following formula:
\begin{eqnarray}\label{eq:convhull}
\prebox{convhull}(X,y) \>\equiv\>
  \forall x,x'\in X && \big[ \big(x\not=x' \wedge 
	\forall z\in X \neg\ordtype(x',x,z) \big)
\\\nonumber
  &&  \to \neg\ordtype(x',x,y) \big]
% \\
% \prebox{convhull}(X,y) \>\equiv&&
%   \forall x\!\in\!X \,\exists x'\!\in\!X \,\big[x\not=x' \wedge
% \\\nonumber
%   &&  \forall z \big( (z=y \vee (z\in X\wedge z\not=x,x'))
% 	\to \neg\ordtype(x',x,z)
% \big)\big]
\end{eqnarray}
Then we may express, for example, that a set $X\subseteq P$ is a convex hole
(i.e., no point outside of $X$ belongs to the convex hull of $X$,
and no point of $X$ belongs to the convex hull of the rest of $X$)
with the following:
\begin{equation}\label{eq:convhole}
 \forall y\!\not\in\!X\, (\neg\prebox{convhull}(X,y)) \>\wedge\>
 \forall Y\!\subseteq\!X\forall z\!\in\!X ( \prebox{convhull}(Y,z) \to z\in Y )
\end{equation}
Further similar examples are easy to come with.

\paragraph{Interpretations and transductions. }
We sketch the concept of ``translating''
between relational structures.
Consider relational signatures $\sigma=\{R_1,\dots,R_q\}$ and
$\tau=\{R'_1,\dots,R'_{t}\}$.
A (simple) {\em MSO interpretation of $\tau$-structures in
$\sigma$-structures} is a $t$-tuple of MSO formulas
$\Psi=(\psi_i: 1\leq i\leq t)$ of the signature $\sigma$,
where the number of free variables of $\psi_i$ equals the arity $a_i$ of $R'_i$.
A $\tau$-structure $T$ is interpreted in a $\sigma$-structure $S$
via $\Psi$ if $T$ and $S$ share the same ground set~$U$ and, 
for each $1\leq i\leq t$, we have $(x_1,\ldots,x_{a_i})\in {R'_i}^{T}$ $\iff$
$S\models \psi_i(x_1,\ldots,x_{a_i})$.

As a short example, consider a point set $P$ and its mirror image $P'$.
Then the order type $\Omega(P')$ can be interpreted in $\Omega(P)$ simply by
taking $\psi_1(a,b,c)\equiv\ordtype(b,a,c)$.
The true power of interpretations will show up in the following.

There is a more general concept of a {\em transduction} from a
$\sigma$-structure $S$ to a set of $\tau$-structures which,
before taking an (MSO) interpretation, 
has abilities (in this order of application); 
(i) to equip $S$ with a fixed number of arbitrary parameters given as unary labels
(because of this, the result of a transduction is not deterministic, but a set of $\tau$-structures),
(ii) to ``amplify'' the ground set of $S$ by taking a bounded number 
of disjoint copies of $S$, and
(iii) to subsequently restrict the ground set by a unary MSO formula.
See the Appendix and/or Courcelle and Engelfriet~\cite{ce12} for more
technical details on transductions.

For a class of relational structures $\cf S$, the image in a transduction $\Psi$ of
the class $\cf S$ is the union of all transduction results,
precisely, $\Psi(\cf S):=\bigcup_{S\in\cf S}\Psi(S)$.
We say that a class $\cf S$ is of {\em bounded clique-width} if there exists
a constant $h$ such that the clique-width of every $S\in\cf S$ is at most~$h$.
Obviously, this is only an asymptotic concepts which makes sense for
infinite classes $\cf S$ and, mainly, it ``smoothens'' marginal technical
differences between various definitions of clique-width.
On this abstract level, we then obtain the following
{\em crucial characterization}:

\begin{theorem}[%Blumensath~\cite{DBLP:journals/apal/Blumensath06},
	Blumensath~and Courcelle~{\cite[Proposition~27]{DBLP:journals/iandc/BlumensathC06}}]
\,\apxmark\label{thm:cwtransdu}
A class $\cf S$ of finite relational structures (of the same signature)
is of bounded clique-width, if and only if
$\cf S$ is contained in the image of an MSO transduction of the class of
finite~trees.
\end{theorem}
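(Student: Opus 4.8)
The plan is to prove the characterization of bounded clique-width via MSO transductions of trees in both directions, treating it as a generalization of the classical graph result (Courcelle--Engelfriet) to arbitrary finite relational signatures.

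\paragraph{Forward direction (bounded clique-width $\Rightarrow$ transduction image).}
First I would take a class $\cf S$ of bounded clique-width, say by constant $h$, so that every $S\in\cf S$ is the value of an expression $\cal E$ of width at most $h$ using operations of the type (w1)--(w5). The key observation is that such an expression is itself a finite \emph{tree}: its leaves are the point-creation operations (w1), and its internal nodes are the disjoint unions (w2), the label-introduction steps (w3), the tuple-adding steps (w4,w4$'$) and the relabellings (w5). I would encode the whole expression tree $T_{\cal E}$ as a \emph{labelled} tree over a fixed finite alphabet (finite because the width, and hence the number of distinct labels and operation types, is bounded by $h$). Then I would design an MSO transduction that, given such a labelled tree, recovers the relational structure $S$: the parameters/labels of the transduction guess which leaves become the ground elements, the copying operation is not even needed here (the ground set injects into the leaves of the tree), and the interpreting MSO formulas must reconstruct each relation $R_i^{S}$ by ``simulating'' the evaluation of $\cal E$ along tree paths. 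The nontrivial point is that whether a tuple $(x_1,\dots,x_{a_i})$ enters $R_i^{S}$ is determined by the sequence of labels the components carry at the moment a (w4)-type operation fires at their least common ancestor; this is an MSO-definable condition on $T_{\cal E}$ because one can track label evolution using set quantification over the nodes of the tree.

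\paragraph{Backward direction (transduction image $\Rightarrow$ bounded clique-width).}
For the converse I would use two facts: trees have bounded clique-width (indeed bounded tree-width, and bounded tree-width implies bounded clique-width), and \emph{clique-width is preserved (up to a constant depending only on the transduction) under MSO transductions}. The second fact is precisely the content that makes transductions the right tool: if $\cf S \subseteq \Psi(\cf T)$ for the class $\cf T$ of finite trees and an MSO transduction $\Psi$, then since each member of $\cf T$ has clique-width bounded by an absolute constant, every structure in $\Psi(\cf T)$ — and hence every $S\in\cf S$ — has clique-width bounded by a constant depending only on $\Psi$. I would either cite this transduction-preservation theorem for the general relational setting (it is exactly the abstract result of Blumensath and Courcelle) or sketch why each of the three transduction ingredients — adding finitely many unary parameters, taking a bounded number of disjoint copies, and restricting by a unary formula, followed by the MSO interpretation — changes the width only by a bounded factor.

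\paragraph{Main obstacle.}
The genuinely technical step is the forward direction's claim that membership of a tuple in a reconstructed relation is MSO-definable on the expression tree. Tracking how labels propagate and get rewritten by (w5) along the path from a leaf to the (w4)-node that creates a tuple, and expressing ``at the firing node the components carry the matching unary/binary labels,'' requires careful MSO bookkeeping over the bounded label set; the binary (pair) labels of (w3), absent in the graph case, are what make the general relational argument more delicate than Courcelle--Engelfriet's original. Since this equivalence is already established in full generality by Blumensath and Courcelle, my plan for the paper is to reduce to their Proposition~27 rather than re-derive it, verifying only that Definition~\ref{def:cwpoints} (the sum-of-arities width on ternary structures over point sets) is a special case of their general relational clique-width, so that the cited characterization applies verbatim to order types.
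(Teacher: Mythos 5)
Your overall strategy---citing Blumensath--Courcelle's Proposition~27 rather than re-deriving it, and then relating Definition~\ref{def:cwpoints} to their general relational clique-width---is exactly the route the paper takes. However, there is a genuine gap in how you close that last step. You propose to verify ``only that Definition~\ref{def:cwpoints} is a special case of their general relational clique-width.'' That verification is indeed trivial (the operations (w1)--(w5) are quantifier-free), but it yields only one inclusion: bounded clique-width in the sense of Definition~\ref{def:cwpoints} implies bounded clique-width in the Blumensath--Courcelle sense, hence membership in a transduction image of trees. For the converse direction of the theorem you need the \emph{reverse} simulation: if $\cf S$ is the image of an MSO transduction of trees, Proposition~27 only gives you that $\cf S$ consists of values of expressions built from disjoint union and \emph{arbitrary} quantifier-free operations over some reference signature, which may a priori use labels and update rules not available in (w1)--(w5). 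Your backward-direction argument via ``clique-width is preserved under transductions'' has the same issue: the preserved quantity is the general relational clique-width, not the restricted sum-of-arities width of Definition~\ref{def:cwpoints}, so boundedness of the latter does not follow without an extra argument.

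The missing piece is the paper's bridging lemma (in the Appendix): for ternary order-type structures, any expression using arbitrary quantifier-free operations can be converted into one using only the operations (w1)--(w5) with unary and binary labels. The key observation there is that, because all operations are quantifier-free, the decision on a triple $(a,b,c)$ depends only on the labels carried by the parts into which $\{a,b,c\}$ is split at the last disjoint union separating them---necessarily a two-element part and a one-element part---so all relevant information can be compressed into one binary label on the pair and one unary label on the singleton, and recursively the binary labels depend only on unary labels of their components. Without this argument (or an equivalent one), the ``if'' direction of Theorem~\ref{thm:cwtransdu} is not established for the clique-width notion the paper actually defines.
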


For a {\em very informal} explanation of the meaning of this statement,
we remark that a tree which is the preimage of the mentioned transduction
gives a hierarchical structure to the clique-width expression in
Definition~\ref{def:cwpoints}.
The arbitrary transduction parameters then determine particular operations
(and labelling) used within the expression, 
and the formula(s) of a final interpretation roughly encode 
Definition~\ref{def:cwpoints} itself.
No copying (``amplification'') is necessary there.

Since the concept of a transduction is transitive, Theorem~\ref{thm:cwtransdu} implies:
\begin{corollary}\label{cor:cwtransdu}
If a class $\cf S$ of order types (of points) is of bounded clique-width, 
then the image of $\cf S$ in an MSO transduction is also of bounded clique-width.
\end{corollary}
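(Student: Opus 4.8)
The plan is to derive the statement directly from Theorem~\ref{thm:cwtransdu} together with the closure of MSO transductions under composition (their ``transitivity,'' as noted just above the corollary). First I would invoke the ``only if'' direction of Theorem~\ref{thm:cwtransdu}: since the class $\cf S$ of order types has bounded clique-width, there is an MSO transduction $\Theta$ of the class $\cf T$ of all finite trees such that $\cf S\subseteq\Theta(\cf T)$.

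Let $\Psi$ denote the given MSO transduction. Recall that by definition $\Psi(\cf S)=\bigcup_{S\in\cf S}\Psi(S)$, so $\Psi$ is monotone with respect to class inclusion; hence $\cf S\subseteq\Theta(\cf T)$ yields $\Psi(\cf S)\subseteq\Psi\big(\Theta(\cf T)\big)$. The crucial step is now to replace the two-stage process $\Psi\circ\Theta$ by a single MSO transduction: by the closure of MSO transductions under composition (Courcelle and Engelfriet~\cite{ce12}), there is an MSO transduction $\Xi$ of finite trees with $\Xi(\cf T)=\Psi\big(\Theta(\cf T)\big)$. Consequently $\Psi(\cf S)\subseteq\Xi(\cf T)$ is contained in the image of an MSO transduction of finite trees. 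Finally I would apply the ``if'' direction of Theorem~\ref{thm:cwtransdu} to $\Xi$: being contained in the image of the tree class under an MSO transduction, the class $\Psi(\cf S)$ is of bounded clique-width, as required.

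I expect the only genuine technical content to lie in the composition step, that is, in justifying that the sequential application of $\Theta$ and then $\Psi$ -- each comprising the parameter-guessing, bounded copying (amplification), restriction, and interpretation phases -- can be folded into one transduction of the same form. This is exactly the standard (but not entirely trivial) composition theorem for MSO transductions: the ground sets are preserved up to the bounded amplification that transductions already permit, and the interpreting formulas of $\Psi$ are pulled back through those of $\Theta$ using the backwards-translation of MSO formulas under interpretations, while the parameter sets of the two stages are merged. Once composition is granted, the remainder is a purely formal chase of inclusions through Theorem~\ref{thm:cwtransdu}, with no computation specific to order types entering the argument.
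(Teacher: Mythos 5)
Your proposal is correct and follows essentially the same route as the paper, which justifies the corollary in one line by combining Theorem~\ref{thm:cwtransdu} (in both directions) with the transitivity (closure under composition) of MSO transductions; you have merely spelled out the monotonicity of $\Psi(\cf S)=\bigcup_{S\in\cf S}\Psi(S)$ and the composition step explicitly. No further changes are needed.
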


\paragraph{Deciding MSO properties. }
Perhaps the most important application of bounded clique-width of point
configurations $P$ could be in faster deciding of MSO-definable properties
(and, in greater generality, of some optimization and counting properties as well,
see examples in \cite{cmr00}) of the order type of~$P$.

\begin{theorem}[Courcelle, Makowsky and Rotics~\cite{cmr00}, via Theorem~\ref{thm:cwtransdu}]
	%Blumensath and C.~\cite{DBLP:journals/iandc/BlumensathC06}]
\label{thm:MSOlin}
Consider a class $\cf S$ of finite relational structures of signature
$\sigma$ and of bounded clique-width.
For any MSO sentence $\varphi$ of signature $\sigma$, 
if a structure $S\in\cf S$ is given on the input alongside with a
clique-width expression of bounded width, then we
can decide in linear time whether $S\models\varphi$
(i.e., whether $S$ has the property~$\varphi$).

Furthermore, under the same assumptions for $\cf S$ and for an MSO formula $\varphi(X)$ with
a free set variable $X$, we can find in linear time a maximum-cardinality
set $X$ such that $S\models\varphi(X)$,
and we can enumerate all sets $X$ such that $S\models\varphi(X)$
in time which is linear in the input plus output size.
\end{theorem}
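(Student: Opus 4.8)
The plan is to follow the classical route of Courcelle, Makowsky and Rotics and reduce model-checking on $S$ to model-checking on the parse tree of the given clique-width expression. Concretely, a width-$\ell$ expression $\mathcal E$ evaluating to $\Omega(P)$ (or, in the annotated case, to the full $\sigma$-structure $S$) can be viewed as a finite tree $T_{\mathcal E}$ whose leaves are the point-creating operations (w1) and whose internal nodes are the operations (w2)--(w5), each carrying a bounded amount of label information. Since only (w1) creates a base object, the leaves of $T_{\mathcal E}$ are in bijection with the points of $P$, i.e.\ with the universe of $S$. Once the concrete expression is fixed, $S$ is obtained from the labelled tree $T_{\mathcal E}$ by a fixed MSO \emph{interpretation} (the interpretation part of the transduction underlying Theorem~\ref{thm:cwtransdu}, with the transduction parameters already pinned down by the given labelling): each relation $R_i^S$ is defined over the leaves by an MSO formula $\theta_i$ that reads off which (w3), (w4), (w4') operations lie on the paths above the leaves involved. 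The width bound $\ell$ guarantees that the $\theta_i$ and the node alphabet are fixed once $\ell$ is fixed.

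First I would establish the decision statement. Here I use the backwards-translation property of MSO interpretations: since $S$ is interpreted in $T_{\mathcal E}$ via the fixed tuple $(\theta_i)$, every MSO sentence $\varphi$ over $\sigma$ rewrites effectively into an MSO sentence $\hat\varphi$ over the tree signature with $T_{\mathcal E}\models\hat\varphi$ iff $S\models\varphi$; one relativises the quantifiers of $\varphi$ to leaves and replaces each atom $R_i(\bar x)$ by $\theta_i(\bar x)$. I then invoke the Thatcher--Wright/Doner theorem that MSO-definable properties of finite labelled trees are exactly those recognised by finite bottom-up tree automata. Thus $\hat\varphi$ compiles into a finite tree automaton $A_{\hat\varphi}$ whose size depends only on $\varphi$ and $\ell$. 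Running $A_{\hat\varphi}$ on $T_{\mathcal E}$ is a single bottom-up pass, and since $|T_{\mathcal E}|$ is linear in $|P|$, this decides $S\models\varphi$ in linear time.

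Next I would treat the optimisation and enumeration statements for a formula $\varphi(X)$ with one free set variable. Every subset $X\subseteq P$ corresponds bijectively to a set of leaves of $T_{\mathcal E}$, which I encode as an extra bit of the leaf label, i.e.\ as a colouring $c\colon\mathrm{leaves}(T_{\mathcal E})\to\{0,1\}$. Backwards translation turns $\varphi(X)$ into an MSO formula over trees carrying this extra bit, and hence into a tree automaton $A$ over the enlarged alphabet. To find a maximum-cardinality $X$, I run $A$ as an optimising automaton: in the bottom-up pass I maintain, for every automaton state $q$ reachable at the current node, the maximum number of $1$-coloured leaves in the subtree driving the automaton into $q$. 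Merging children at a (w2)--(w5) node combines these maxima through the fixed, finite transition function, so the whole optimisation is again one linear pass; reading off the best value at an accepting state at the root gives $\max\{|X| : S\models\varphi(X)\}$, and back-pointers reconstruct an optimal $X$. For enumeration I replace `maximum' by `all realisers': a top-down pass over the stored state information lists every colouring accepted at the root, spending time proportional to the size of each produced $X$ plus a fixed per-node overhead, which gives the input-plus-output linear bound.

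The main obstacle is the backwards-translation step together with the verification that the interpretation defining $S$ from $T_{\mathcal E}$ is genuinely MSO and of bounded complexity for the \emph{multi-ary} operations (w3)--(w4'): one must check that the auxiliary binary labels attached to a pair $(a,b)$ of leaves are describable by an MSO formula speaking only about those two leaves and the (w3), (w5) nodes above them, so that the ternary relation added by (w4) and (w4') is itself MSO-definable over $T_{\mathcal E}$. Once this is in place, the rest is the standard automaton machinery; the only care needed is to keep the automaton size and the per-node work dependent solely on $\varphi$ and the width bound, so that the dependence on $|P|$ stays linear.
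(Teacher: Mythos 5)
The paper does not prove this theorem itself: it is invoked as a known result of Courcelle, Makowsky and Rotics~\cite{cmr00}, transported to the present setting of multi-ary clique-width via the transduction characterization of Theorem~\ref{thm:cwtransdu}. Your sketch is a faithful reconstruction of exactly the argument that citation rests on --- parse tree of the expression, MSO interpretation of $S$ over its leaves, backwards translation, Thatcher--Wright tree automata, and the weighted bottom-up dynamic program for the optimization and enumeration variants --- including the one genuinely delicate point (MSO-definability of the binary labels created by the multi-ary operations, which the paper's Appendix lemma addresses), so it is correct and takes essentially the same route the paper intends.
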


\section{Assorted examples}
%%%%%%%%%%%%%%%%%%%%%%%%%%%%%%%%%%
\label{sec:assorted}

First, to give readers a better feeling about how big the clique-width of
``nicely looking'' point sets in the plane can be, we show the following:

\begin{theorem}\label{thm:cwbasic2}\apxmark\
Let $P$ be a point configuration, $P_0\subseteq P$ and $d=|P\setminus P_0|$.
\\a) If all points of $P_0$ are collinear, then the clique-width of $P$ is
in $\ca O(d)$.
\\b) Assume the points of $P_0$ are in a strictly convex position.
If $d\leq1$, then the clique-width of $P$ is bounded (by a constant).
On the other hand, there exist examples already with $d=2$ and unbounded
clique-width of $P$.
\end{theorem}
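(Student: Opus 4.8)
```latex
\paragraph{Proof plan for Theorem~\ref{thm:cwbasic2}.}
The plan is to handle the three distinct assertions separately, building
each bounded-clique-width claim by exhibiting an explicit expression
following Definition~\ref{def:cwpoints}, and the unbounded claim by a
pigeon-hole counting argument analogous to the lower bound in the outline
of Proposition~\ref{prop:unaryvs}.

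For part a), I would first observe that when all of $P_0$ is collinear, the
order type restricted to $P_0$ is essentially trivial: any triple of
collinear points is oriented $0$, so $\Omega$ contains \emph{no} triple
whose three points all lie in $P_0$. Hence the only nontrivial triples
involve at least one of the $d$ points of $P\setminus P_0$. The idea is to
give every point of $P\setminus P_0$ its own unique (unary) label --- this
costs $\ca O(d)$ labels --- while the collinear points of $P_0$ can be
introduced along the line in their natural linear order using only a
constant number of labels (one ``current front'' label and one ``settled''
label), much as in the convex case. Since each $p_j\in P\setminus P_0$ is
individually identified, at the moment we add $p_j$ we can create exactly
the binary labels and then the ternary triples needed to record the
orientation of every triple incident to $p_j$, distinguishing the two
orientations via the two rules (w\ref{it:wtriple}) and (w\ref{it:wtriple}').
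The collinear points, being orientation-indistinguishable among themselves,
never need to be told apart, so the total label budget is $\ca O(d)$.

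For the positive half of part b) with $d\le 1$, the strictly convex set
$P_0$ has bounded clique-width by Proposition~\ref{prop:unaryvs}; the plan
is to take the width-$4$ expression for $P_0$ from that proposition and,
before the final evaluation, introduce the single extra point of
$P\setminus P_0$ with its own fresh unary label. Because this point carries
a unique label throughout, every triple involving it can be oriented
correctly by a constant number of additional binary-labelling and
triple-creating steps (again splitting the two orientations by
(w\ref{it:wtriple}) versus (w\ref{it:wtriple}')), so the width stays
bounded.

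The main obstacle, and the real content, is the negative half of part b):
constructing a family of configurations with $d=2$ whose clique-width is
\emph{unbounded}. The plan is to place $P_0$ as $n$ points in convex
position and add two points $a,b$ positioned so that each point of $P_0$ is
distinguished, via its orientation with respect to the pair $(a,b)$ (for
instance, placing $a,b$ so that the line through them separates the convex
set in a way that encodes an arbitrary subset, i.e.\ the orientation
$\ordtype(a,b,p)$ realizes an essentially arbitrary $2$-coloring of $P_0$).
Then I would argue, as in Proposition~\ref{prop:unaryvs}, that any
clique-width expression has a ``last'' disjoint-union (w\ref{it:wdisju})
splitting the points into two large parts, and that a bounded number of
labels cannot simultaneously encode enough information to reconstruct all
orientations across the cut; the subtlety here is that we now have binary
labels available, so the counting must account for the number of
\emph{label classes of pairs} crossing the cut rather than just singletons.
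The key step is to choose the two extra points so that the crossing order
type carries information equivalent to an arbitrary bipartite adjacency
pattern (which is known to force unbounded clique-width), and then lift a
standard unbounded-clique-width witness through the fact that such a
pattern is MSO-interpretable in $\Omega(P)$, invoking
Corollary~\ref{cor:cwtransdu} to transfer the unboundedness. Verifying that
$d=2$ truly suffices --- i.e.\ that two extra points already let the order
type encode a rich enough relation --- is the delicate geometric point I
expect to require the most care.
```
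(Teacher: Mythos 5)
Your part~a) follows essentially the paper's route (unique unary labels on the $d$ outer points, a constant-label left-to-right sweep of the line with binary labels recording the order of pairs on it, and the side of the line on which an outer point lies deciding between (w\ref{it:wtriple}) and (w\ref{it:wtriple}')), so that half is fine. The positive half of part~b), however, has a genuine gap. Your claim that, because the single extra point $m$ carries a unique unary label, ``every triple involving it can be oriented correctly by a constant number of additional binary-labelling and triple-creating steps'' fails: the difficulty is not identifying $m$ but distinguishing the \emph{pairs} $(p_i,p_j)$ of hull points. The orientation of $(p_i,p_j,m)$ depends on which side of the chord $p_ip_j$ the point $m$ lies, and at the end of the width-$4$ expression of Proposition~\ref{prop:unaryvs} all pairs $(p_i,p_j)$ with $i<j$ carry the same binary label, while (w\ref{it:wtriple}) and (w\ref{it:wtriple}') act uniformly on an entire label class. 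Some of those pairs need the triple $(p_i,p_j,m)$ and others need $(p_j,p_i,m)$, so appending $m$ after (or even alongside) the unmodified construction necessarily creates wrongly oriented triples. The paper's actual argument restructures the whole expression: it sweeps two ``opposite'' arcs of the hull simultaneously, synchronised by the angular position of $m$, so that at each step the pairs receiving a given binary label see $m$ on a consistent side. That interleaving is the real content of the $d=1$ case and is missing from your plan.

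The $d=2$ lower bound also has a gap in the proposed geometric encoding. The orientation $\ordtype(a,b,p)$ records only which side of the line $ab$ the point $p$ lies on; since $P_0$ is convex, that line splits $P_0$ into two \emph{contiguous arcs}, not an arbitrary $2$-colouring, and in any case this is only unary information on $P_0$ --- unary predicates are exactly what transductions obtain for free as parameters, so they can never be the source of unbounded clique-width. What is needed, and what the paper does, is to use the two interior points to define a nontrivial \emph{binary} relation on $P_0$: place the hull points so that $(p_i,m,q_i)$ and $(p_{i+k},n,q_i)$ are collinear; collinearity is MSO-definable from the order type, so $\exists z\,(\prebox{collin}(x,m,z)\wedge\prebox{collin}(y,n,z))$ interprets the ``shift by $k$'' relation $p_i\mapsto p_{i+k}$, which combined with the MSO-definable hull-successor relation yields a $k\times k$ grid, and Corollary~\ref{cor:cwtransdu} then gives unboundedness. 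Note also that ``an arbitrary bipartite adjacency pattern'' does not by itself force unbounded clique-width (complete bipartite graphs have clique-width $2$); one needs a concrete witness such as the grid. Your alternative direct counting at the last (w\ref{it:wdisju}) would, as you yourself observe, have to contend with binary labels crossing the cut, and no argument is offered for why that pigeonhole survives.
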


\vspace*{-2ex}\paragraph{Proof outline. }
In case (a), we first create the $d$ points of $P\setminus P_0$, each with
its unique label, and their counter-clockwise order triples.
See Figure~\ref{thm:cwbasic2}(a).
Then we stepwise create the collinear points of $P_0$, ordered from left to right.
During the steps, we add binary labels on $P_0$ between each pair from left
to right, and we also in the right order create the needed order triples
having one point in $P_0$ and two points in $P\setminus P_0$.
At the end, we easily create from the binary labels on $P_0$ the remaining order triples having 
two points in $P_0$ and one in $P\setminus P_0$.

In case (b), if $d=1$, we construct an expression simlarly as in
Proposition~\ref{prop:unaryvs}, but we simultaneously
proceed in two subsequences of the counter-clockwise perimeter of $P_0$,
``opposite'' to each other.
This process allows us to create also the order triples involving the sole
point of $P\setminus P_0$ (in ``the middle'').

In case (b) with $d\geq2$, we present a construction informally shown in
Figure~\ref{thm:cwbasic2}(b).
The underlying idea is to construct collinear triples ``through'' the points
of $P\setminus P_0$, such as the depicted triples $p_0,m,q_0$ and $q_0,n,p_k$.
Since collinear triples are easy to detect within the order type, we can
this way interpret the binary relation between $p_0$ and $p_k$, and
analogously between subsequent $p_1$ and $p_{k+1}$ and so on
(see the green dashed arrows in the picture).
We can also routinely describe in MSO logic the neighbouring pairs of
vertices of a convex hull, see $x$ and $x'$ in \eqref{eq:convhull}.
Consequently, in such a suitably constructed set $P$, we can interpret an
arbitrarily large square grid graph on the points
$p_0,p_1,\ldots,p_k,p_{k+1},\ldots$.
Since the squre grid is a folklore basic example of unbounded
clique-width~\cite{ce12}, we get from Corollary~\ref{cor:cwtransdu}
that the clique-width of such configurations $P$ (with $d=2$) is unbounded.

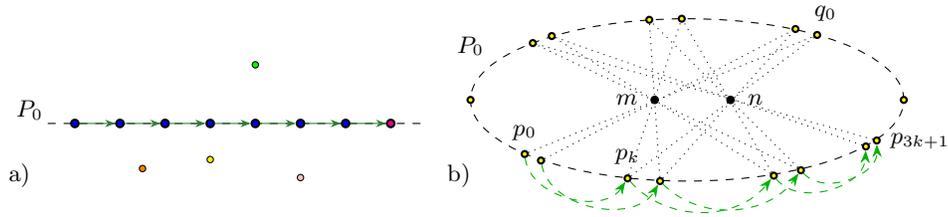
\begin{figure}[t]
\definecolor{dtsfsf}{rgb}{0.8274509803921568,0.1843137254901961,0.1843137254901961}
\definecolor{sexdts}{rgb}{0.1803921568627451,0.49019607843137253,0.19607843137254902}
\definecolor{rvwvcq}{rgb}{0.08235294117647059,0.396078431372549,0.7529411764705882}
\definecolor{dbwrru}{rgb}{0.8588235294117647,0.3803921568627451,0.0784313725490196}
$${\rm a)}\hspace*{-2ex}
\begin{tikzpicture}[scale=0.6, >={Stealth[scale=0.8]}]
\draw[dashed] (-0.6,0) -- (7.7,0);
\tikzstyle{every node}=[draw,thick, shape=circle, inner sep=1pt, color=black, fill=blue]
\node (m1) at (0,0) {}; \node (m2) at (1,0) {}; \node (m3) at (2,0) {}; \node (m4) at (3,0) {};
\node (m5) at (4,0) {}; \node (m6) at (5,0) {}; \node (m7) at (6,0) {}; \node[fill=magenta] (m8) at (7,0) {};
\tikzstyle{every path}=[draw, ->,color=sexdts]
\draw (m1) -- (m2); \draw (m2) -- (m3); \draw (m3) -- (m4); \draw (m4) -- (m5);
\draw (m5) -- (m6); \draw (m6) -- (m7); \draw (m7) -- (m8); 
\draw [color=black,fill=orange] (1.5,-1) circle (2pt);
\draw [color=black,fill=yellow] (3,-0.8) circle (2pt);
\draw [color=black,fill=pink] (5,-1.2) circle (2pt);
\draw [color=black,fill=green] (4,1.3) circle (2pt);
\node[draw=none,fill=none] at (-1,0.3) {$P_0$};
\end{tikzpicture}
\quad{\rm b)}
\begin{tikzpicture}[scale=0.72, >={Stealth[scale=0.8]}]
\path [use as bounding box] (-4,-1.5) rectangle (8,1.4);
\draw[dashed] (0,0) ellipse (40mm and 15mm);
\node[draw=none,fill=none] at (-4,1) {$P_0$};
\tikzstyle{every node}=[draw, shape=circle, inner sep=1pt, color=black, fill=black]
\small
\node[label=left:$m~$] (m1) at (-0.6,0) {};
\node[label=right:$~n$] (m2) at (0.8,0) {};
\tikzstyle{every node}=[draw,thick, shape=circle, inner sep=0.8pt, color=black, fill=yellow]
\node (p1) at (-4,0) {}; \node (p2) at (4,0) {}; 
\node[label=above:$p_0$] (p3) at (-3,-1) {};
\node[label=above:$~q_0\!\!$] (q3) at (2.4,1.2) {}; 
\node[label=above:$p_k$] (pp3) at (-1.1,-1.45) {};
\node (p4) at (-2.7,-1.12) {}; \node (q4) at (2.0,1.31) {}; \node (pp4) at (-0.5,-1.5) {};
\node (q5) at (-0.1,1.5) {}; \node (pp5) at (1.6,-1.4) {};
\node (q6) at (-0.7,1.48) {}; \node (pp6) at (2.1,-1.3) {};
\node (q7) at (-2.5,1.18) {}; \node (pp7) at (3.3,-0.86) {};
\node (q8) at (-2.85,1.05) {}; \node[label=right:$\>p_{3k+1}$] (pp8) at (3.5,-0.75) {};
\tikzstyle{every path}=[dotted, color=black]
\draw (p3) -- (q3); \draw (pp3) -- (q3);
\draw (p4) -- (q4); \draw (pp4) -- (q4);
\draw (pp3) -- (q5); \draw (pp5) -- (q5);
\draw (pp4) -- (q6); \draw (pp6) -- (q6);
\draw (pp5) -- (q7); \draw (pp7) -- (q7);
\draw (pp6) -- (q8); \draw (pp8) -- (q8);
\tikzstyle{every path}=[dashed, color=green!70!black, ->, >={Stealth}]
\draw (p3) to[bend right=70] (pp3);
\draw (p4) to[bend right=70] (pp4);
\draw (pp3) to[bend right=50] (pp5);
\draw (pp4) to[bend right=50] (pp6);
\draw (pp5) to[bend right=70] (pp7);
\draw (pp6) to[bend right=70] (pp8);
\end{tikzpicture}
$$
\caption{Illustrations of the two parts of Theorem~\ref{thm:cwbasic2}.
(a) Labelling for an expression of bounded width.
(b) A sketch of interpreting a large grid within the point configuration.
}
\label{fig:assorted-cw}
\end{figure}

\subsection*{Some NP-hard problems of point configurations}

As already mentioned, perhaps the most interesting computing application
of clique-width of point sets could be in designing algorithms which
run in parameterized polynomial, or even linear, time with respect to the
clique-width as the parameter.
This is especially relevant for problems for which no such algorithms are
believed to exist in general, such as for NP-hard problems.

A parameterized problem has an {\em FPT algorithm} if the algorithm runs in time
$\ca O(f(d)\cdot n^c)$ where $f$ is an arbitrary computable function of the
(fixed) parameter~$d$, and $c$ is a constant. If $c=1$, then we speak about a linear
FPT algorithm (e.g., this is the complete case of Theorem~\ref{thm:MSOlin}).

Since, except graph clique-width, there is no known FPT algorithm (even
approximation one) for finding a clique-width expression of relational
structures of bounded clique-width, we always must assume that an expression
of bounded width is given alongside with the input point configuration.
Notice that for the above presented examples of small clique-width, the
relevant expressions are very natural and easy to come with.
Recall also related Remark~\ref{rem:spacelinear}.

\vspace*{-1ex}%
\paragraph{General position subset. }
This problem asks whether, for a given point set $P$ and integer~$k$,
there exists a subset $Q\subseteq P$ such that no three points of $Q$ are
collinear and $|Q|\geq k$.
This problem is NP-hard and APX-hard by~\cite{DBLP:journals/ijcga/FroeseKNN17}.

\begin{theorem}\label{thm:genposs}
Assume a point set $P$ is given alongside with a clique-width expression
(for $\Omega(P)$) of width $d$.
Then the {\sc General position subset} problem of $P$ is solvable in linear
FPT time with respect to the parameter~$d$.
\end{theorem}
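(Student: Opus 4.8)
The plan is to reduce the problem to the optimization form of Theorem~\ref{thm:MSOlin}. Since a clique-width expression of width $d$ for $\Omega(P)$ is given on the input, it suffices to express the property ``$X$ is a general-position subset of $P$'' by an MSO formula $\varphi(X)$ with one free set variable, and then to ask for a maximum-cardinality set $X$ with $\Omega(P)\models\varphi(X)$.

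First I would express collinearity of a triple purely in terms of the order-type relation. By the basic observation (recorded earlier) that $\omega(a,b,c)=0$ iff $(a,b,c),(b,a,c)\notin\Omega$, three points $a,b,c$ are collinear exactly when $\neg\ordtype(a,b,c)\wedge\neg\ordtype(b,a,c)$ holds; write $\prebox{coll}(a,b,c)$ for this subformula. The only care needed is distinctness: the same observation also yields $\prebox{coll}(a,b,c)$ whenever two of the three arguments coincide, so when forbidding collinear triples inside $X$ I must restrict to pairwise distinct points. Thus I set
\begin{equation*}
\varphi(X)\>\equiv\>\forall a,b,c\in X\ \big[(a\neq b\wedge a\neq c\wedge b\neq c)\to\neg\prebox{coll}(a,b,c)\big],
\end{equation*}
which holds precisely when $X$ contains no three pairwise-distinct collinear points, i.e.\ when $X$ is in general position. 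This is a legitimate MSO formula of the signature consisting of the single ternary symbol $\ordtype$, with one free set variable.

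Then I would invoke the optimization part of Theorem~\ref{thm:MSOlin}: given the width-$d$ expression, a maximum-cardinality set $X^\ast$ with $\Omega(P)\models\varphi(X^\ast)$ is computable in linear time. Since $X^\ast$ is, by construction, a largest general-position subset of $P$, the {\sc General position subset} instance $(P,k)$ is a yes-instance iff $|X^\ast|\geq k$, and this final comparison costs only constant additional time.

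I do not expect a genuine obstacle here; the entire content lies in the routine but necessary translation of the geometric predicate into MSO over the single relation $\ordtype$. The one point that must not be overlooked is the distinctness guard in $\varphi$: without it, the degeneracy of $\omega$ on repeated arguments (so that $\prebox{coll}(a,a,c)$ always holds) would render $\varphi$ false on every nonempty $X$, collapsing the optimum to the empty set and breaking correctness.
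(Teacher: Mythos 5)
Your proposal is correct and follows essentially the same route as the paper: your formula $\varphi(X)$ is, after unfolding $\neg\prebox{coll}(a,b,c)$ into $\ordtype(a,b,c)\vee\ordtype(b,a,c)$, exactly the paper's formula (including the pairwise-distinctness guard), and the appeal to the maximization part of Theorem~\ref{thm:MSOlin} followed by comparison with $k$ is identical.
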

\begin{proof}
We write the MSO formula 
\vspace*{-1ex}$$
\varphi(X)\equiv\> \forall x,y,z \,\in X\, \big[
 x\not=y\not=z\not=x \to \big(\ordtype(x,y,z)\vee\ordtype(y,x,z) \big)\big]
\vspace*{-1ex}$$
to say that no three points in $X$ are collinear,
and then compute using Theorem~\ref{thm:MSOlin} the value
$\max_{\Omega(P)\models\varphi(X)}|X|$ and compare to~$k$.
\qed\end{proof}

A very similar simple approach works also for the NP-hard problem
{\sc Hitting set for induced lines} \cite{DBLP:conf/cocoon/RajgopalAGKM13},
which asks for a minmum-cardinality subset $H\subseteq P$ such that 
the lines between each pair of points of $P$ all contain a point of~$H$.

\vspace*{-1ex}%
\paragraph{Minimum convex partition. }
Consider a given point set $P$ and an integer $k$.
The objective of this problem~\cite{CGchallenge20} is to decide whether 
the convex hull $\prebox{conv}(P)$~of~$P$ can be partitioned into $\leq k$ {convex faces}.
By a {\em convex face} in this situation we mean the convex hull of 
a subset $Q\subseteq P$ which is a convex hole of $P$ (recall \eqref{eq:convhole}).
Note that in our definition $Q$ must be strictly convex, but
we may as well consider the non-strict variant in which 
points of $Q$ are allowed to lie on the boundary of $\prebox{conv}(Q)$ not
in the vertices (and the arguments would be similar).

This problem has been recently claimed
NP-hard~\cite{DBLP:journals/corr/abs-1911-07697}.
Unfortunately, \mbox{inherent} limitations of MSO logic do not allow us to
directly formulate the {\sc Minimum convex partition} as an MSO optimization
problem (one is not allowed~to~quantify set families), but we can handle it
if we take $k$ as an additional parameter.

\begin{theorem}\label{thm:convparti}\apxmark\
Assume a point set $P$ given alongside with a clique-width expression
of width $d$.
The {\sc Minimum convex partition} problem of $P$ into $\leq k$ convex faces
is solvable in linear FPT time with respect to the parameter~$d+k$.
\end{theorem}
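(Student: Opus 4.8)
Since $k$ is taken as part of the parameter, we are allowed to use $k$ monadic
(set) variables $X_1,\dots,X_k$, one for the vertex set of each convex face,
and still obtain a \emph{fixed} formula whose size depends only on~$k$.
The plan is to write an MSO sentence
$$
\varphi_k \>\equiv\> \exists X_1\cdots\exists X_k\ \psi(X_1,\dots,X_k)
$$
asserting that $X_1,\dots,X_k$ are the vertex sets of a partition of
$\prebox{conv}(P)$ into at most $k$ convex faces, and then to feed $\varphi_k$ to
the decision part of Theorem~\ref{thm:MSOlin}.
Because the order-type signature is fixed and $|\varphi_k|$ depends only on~$k$,
the resulting running time is $\ca O(f(d,k)\cdot n)$, that is, linear FPT in the
parameter~$d+k$.

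The formula $\psi$ would collect three groups of conditions.
First, each nonempty $X_i$ must be a convex hole of $P$; this is exactly
formula~\eqref{eq:convhole} applied to $X_i$, which already guarantees that
$\prebox{conv}(X_i)$ contains no point of $P$ in its interior.
Second, every point of $P$ must be a vertex of some face, the coverage clause
$\forall z\,\bigvee_{i=1}^{k}(z\in X_i)$.
Third, and most importantly, the faces must actually tile $\prebox{conv}(P)$,
which I would split into (i) the interiors of the $\prebox{conv}(X_i)$ being
pairwise disjoint, and (ii) their union covering the whole region
$\prebox{conv}(P)$.
For (i) it suffices, since the faces are convex, to forbid for every pair
$i\neq j$ that a vertex of $X_j$ lie strictly inside $\prebox{conv}(X_i)$ (a
strict variant of $\prebox{convhull}$) and that a boundary edge of $X_i$
properly cross a boundary edge of $X_j$ -- a proper crossing being a fixed
Boolean combination of $\ordtype$-atoms expressing that the endpoints of each
segment lie on opposite sides of the line through the other.
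A boundary (hull) edge of $X_i$ is a pair $u,v\in X_i$ all of whose remaining
points of $X_i$ lie on one side, detected exactly as the neighbouring pair
$x,x'$ in~\eqref{eq:convhull}.

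The genuinely delicate part, and the step I expect to be the main obstacle, is
condition (ii): expressing full coverage of $\prebox{conv}(P)$ purely from the
order type, where no area or metric notion is available.
I would handle this by a gluing argument on oriented boundary edges.
Orient the boundary of each face counter-clockwise, so that the face lies to the
left of each of its directed edges; this orientation is order-type definable.
Given pairwise-disjoint interiors, full coverage should then be equivalent to the
edge-matching condition that for every directed hull edge $(u,v)$ of some face
$X_i$, either $(u,v)$ lies on the outer boundary $\prebox{conv}(P)$, or there is
exactly one other face $X_j$ having $(v,u)$ as a directed boundary edge.
Intuitively every internal edge is then shared by exactly two faces and every
boundary edge by exactly one, and an Euler-characteristic argument shows the
faces leave no gap.
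This is again a Boolean combination, over the $\ca O(k^2)$ index pairs, of
formulas with a constant number of point quantifiers, hence MSO; but verifying
rigorously that this combinatorial gluing condition is both necessary and
sufficient for a geometric tiling -- and correctly treating collinear points
lying on a face boundary -- is where the real work lies.

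Finally I would assemble $\psi$ as the conjunction of the three groups, take its
$k$-fold existential closure to obtain $\varphi_k$ (allowing empty $X_i$ to cover
the ``$\leq k$'' requirement), and invoke Theorem~\ref{thm:MSOlin} to decide
$\Omega(P)\models\varphi_k$ in linear FPT time; the full correctness argument for
the tiling conditions is deferred to the Appendix.
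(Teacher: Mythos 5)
Your proposal matches the paper's proof essentially step for step: existentially quantify the $k$ vertex sets, require each to be a convex hole via \eqref{eq:convhole}, enforce non-crossing of the hull boundaries together with the edge-matching condition (every boundary edge of a face is shared with exactly one other face or lies on the boundary of $\prebox{conv}(P)$), and finish with the decision version of Theorem~\ref{thm:MSOlin}. The paper likewise leaves the equivalence of this combinatorial gluing with a genuine geometric tiling as a ``routine combination'' of auxiliary formulas for boundary edges and segment crossings, so your outline is, if anything, slightly more explicit about where the remaining care is needed.
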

\vspace*{-2ex}\paragraph{Proof outline. }
Let $\prebox{convhole}(X)$ denote the MSO formula \eqref{eq:convhole}.
We may now write
\vspace*{-1ex}$$
\exists X_1,\ldots,X_k\> \left[ \bigwedge\nolimits_{1\leq i \leq k}
	\prebox{convhole}(X_i)  ~\wedge~ \prebox{convpartition}(X_1,\ldots,X_k)  \right]
\vspace*{-0ex}$$
where the subformula $\prebox{convpartition}$ checks whether the convex
hulls of the sets $X_i$ partition $\prebox{conv}(P)$.
At this point, we know that each $X_i$ is a convex hole in~$P$, and
we further test for set inclusion and the following two conditions:
\begin{itemize}\vspace*{-1ex}
\item the boundaries of $\prebox{conv}(X_i)$ and $\prebox{conv}(X_j)$
($1\leq i<j\leq k$) do not cross, and
\item every boundary edge of $\prebox{conv}(X_i)$ is, at the same time, a
boundary edge of exactly one of $\prebox{conv}(X_j)$ ($i\not=j$) or of $\prebox{conv}(P)$.
\end{itemize}\vspace*{-1ex}
Both conditions can be, although not easily, stated in MSO over order types.

\paragraph{Terrain guarding. }
Another NP-hard problem formulated on point sets
\cite{DBLP:journals/siamcomp/KingK11} is that of guarding an
$x$-monotone polygonal line $L$ with the given vertex set $P$.
The objective of guarding is to find a minimum-cardinality vertex guard
set $G\subseteq P$ such that every point $\ell$ on $L$ is seen by some 
point $g\in G$ ``from above the terrain'', that is, the straight line segment from $g$
to $\ell$ is never strictly below~$L$.

Note that the point set $P$ (no two points of the same $x$-coordinate)
uniquely determines the terrain~$L$, with the vertices ordered by their
$x$-coordinates as $P=(p_1,p_2,\ldots,p_n)$.
However, the order type $\Omega(P)$ is not (unless we would add an
auxiliary point ``at infinity'' in the $y$-axis direction).
That is why we assume the terrain $L$ given as a relational structure 
consisting of ternary $\Omega(P)$ and the binary successor relation
consisting of the pairs $(p_1,p_{i+1})$ for $1\leq i<n$.

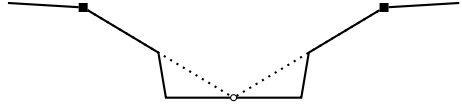
\begin{figure}[t]
$$
\begin{tikzpicture}[yscale=0.6]
\tikzstyle{every node}=[draw, inner sep=1.5pt, color=black, fill=black]
\node (g) at (0,2) {};
\node (gg) at (4,2) {};
\tikzstyle{every node}=[draw=none]
\draw[thick] (-1,2.1) -- (g) -- (1,1) -- (1.1,0) -- (2.9,0) -- (3,1) -- (gg) -- (5,2.1);
\tikzstyle{every node}=[draw, shape=circle, inner sep=0.8pt, color=black, fill=white]
\node (m) at (2,0) {};
\draw[dotted,thick] (g) -- (m) -- (gg);
\end{tikzpicture}
$$
\caption{Guarding a terrain: the two black square vertices guard the whole
terrain, but the bottom horizontal segment is not seen by any single one of them.
To turn this (pair of guards) into a valid segmented terrain guarding solution, we may add
the hollow point as an additional vertex of the terrain.
}
\label{fig:terrhalf}
\end{figure}

There is one further complication in regard of the order type $\Omega(P)$ of
the terrain in this problem:
if, in an instance, some edge of $L$ is seen together by two guards, but no
one sees the full edge, then knowing only $\Omega(P)$ is not sufficient to
verify validity of such a solution (see Figure~\ref{fig:terrhalf}).
That is why we define here the {\sc Segmented terrain guarding} variant in
which every segment of $L$ must be seen by a single vertex guard $g$
% of the vertex guard set~$G$ 
and, moreover, there is a dedicated subset $P_1\subseteq P$ such that the
guards are selected from $g\in P_1$.
By a natural subdivision of terrains in the hard instances of terrain guarding
\cite{DBLP:journals/siamcomp/KingK11} we immediately get that also
{\sc Segmented terrain guarding} is NP-hard.

\begin{theorem}\label{thm:terraing}\apxmark\
Assume a polygonal terrain $L$ given alongside with a clique-width expression
of width $d$ (defining both the successor relation and the order type of 
the vertices, cf.~end of Section~\ref{sec:ordertypes}).
The {\sc Segmented terrain guarding} problem of $L$
is solvable in linear FPT time with respect to the parameter~$d$.
\end{theorem}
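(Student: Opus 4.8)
The plan is to reduce the problem to maximizing the cardinality of a set definable in MSO over the relational structure $S=(P,\Omega,\prebox{succ},P_1)$, where $\prebox{succ}$ is the given successor relation and $P_1$ the (unary) annotation of admissible guards, and then to invoke Theorem~\ref{thm:MSOlin}. Since that theorem delivers a \emph{maximum}-cardinality witness while we seek a \emph{minimum}-size guard set $G$, I would work with the complement: writing $\prebox{valid}(G)$ for ``$G\subseteq P_1$ is a feasible guard set'', the optimum equals $n-\max\{|X| : S\models\prebox{valid}(\overline X)\}$, where $\overline X=P\setminus X$ and $|P|=n$, and $\prebox{valid}(\overline X)$ is obtained from $\prebox{valid}(G)$ by replacing each ``$g\in G$'' with ``$g\notin X$''. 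It therefore suffices to express $\prebox{valid}$ in MSO. First I would recover the left-to-right linear order of the vertices, $\prebox{lt}(x,y)$, as the transitive closure of $\prebox{succ}$; this is MSO-definable because $\prebox{succ}$ forms a path, e.g. $\prebox{lt}(x,y)\equiv x\neq y\wedge\forall Z\,[\,(x\in Z\wedge\forall a\forall b((a\in Z\wedge\prebox{succ}(a,b))\to b\in Z))\to y\in Z\,]$.

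The core of the proof is a combinatorial characterization of visibility in terms of the order type. I claim that a vertex $g$ sees a vertex $v$ (from above) if and only if every terrain vertex $w$ whose $x$-coordinate lies strictly between those of $g$ and $v$ is weakly below the line $gv$: indeed, the difference between the straight segment $gv$ and the terrain $L$ is piecewise linear with breakpoints exactly at the intermediate vertices, so it is nonnegative on the whole span iff it is nonnegative at those vertices. Assuming $g$ is left of $v$, ``$w$ weakly below line $gv$'' is precisely $\neg\ordtype(g,v,w)$ (the counter-clockwise side of the rightward-directed segment $g\to v$ is the strict upper halfplane), which gives
$$\prebox{seesR}(g,v)\equiv \prebox{lt}(g,v)\wedge\forall w\,[\,(\prebox{lt}(g,w)\wedge\prebox{lt}(w,v))\to\neg\ordtype(g,v,w)\,];$$
a symmetric formula $\prebox{seesL}$ handles $g$ right of $v$, and $\prebox{sees}(g,v)\equiv g=v\vee\prebox{seesR}(g,v)\vee\prebox{seesL}(g,v)$.

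Next I would argue that a single guard $g$ sees the \emph{whole} edge $s=(u,v)$ (with $\prebox{succ}(u,v)$) if and only if it sees both endpoints, i.e. $\prebox{covers}(g,u,v)\equiv\prebox{sees}(g,u)\wedge\prebox{sees}(g,v)$. This rests on the fact that the portion of a \emph{single} terrain edge visible from a fixed viewpoint is an interval: since no vertex has $x$-coordinate strictly inside the span of $s$, every blocking vertex lies entirely to one side of $s$ and hence, by the slope ordering from $g$, occludes a sub-interval of $s$ anchored at one of its endpoints; intersecting these interval-complements leaves an interval, which contains all of $s$ once it contains both endpoints. With this in hand, $\prebox{valid}(G)$ is assembled routinely as $(\forall g\,(g\in G\to P_1(g)))\wedge\forall u\forall v\,(\prebox{succ}(u,v)\to\exists g\,(g\in G\wedge\prebox{covers}(g,u,v)))$, and Theorem~\ref{thm:MSOlin} then yields the linear-time FPT algorithm in the parameter $d$. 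The main obstacle is exactly the visibility analysis of the two preceding paragraphs --- pinning down the order-type encoding of ``above the terrain'' with the correct orientation in both $x$-order cases, and justifying the reduction from ``guards an edge'' to ``sees its two endpoints''; once these are settled, the remaining MSO bookkeeping and the appeal to Theorem~\ref{thm:MSOlin} are standard.
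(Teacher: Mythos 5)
Your proposal is correct and follows essentially the same route as the paper: encode the terrain as the annotated relational structure (order type, successor relation, allowed-guard predicate), express ``every terrain edge is seen by a single guard of $X$'' in MSO via the transitive closure of the successor relation together with order-type tests on the intermediate ``peak'' vertices, and invoke Theorem~\ref{thm:MSOlin}. Your reformulation of edge-coverage as ``sees both endpoints'' (justified by the interval structure of occlusion on a single edge) is equivalent to the paper's condition that the guard lies above the edge's supporting line and no peak blocks it, and your complementation trick for turning minimization into maximization is a harmless variant of the paper's direct appeal to the minimum-cardinality form of the optimization theorem.
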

\vspace*{-2ex}\paragraph{Proof outline. }
We show a formula $\prebox{seguard}(X)$ stating that every segment of
the terrain $L$ is seen by one point of $X$.
There we verify that, for every successive pair of vertices $(p_i,p_{i+1})$ of
$L$, there exists $x\in X$ such that; 
\begin{itemize}\vspace*{-1ex}
\item
the triple $(x,p_i,p_{i+1})$ is oriented counter-clockwise
(for $x$ to see the segment $\overline{p_ip_{i+1}}$ ``from above''),
and 
\item
no ``peak'' $z$ on $L$ between $\overline{p_i,p_{i+1}}$ and $x$ is oriented clockwise
from $(x,p_{i+1})$ (if $z$ is to the left of~$x$) or counter-clockwise from
$(x,p_{i})$ ($z$ to the right of~$x$).
\end{itemize}\vspace*{-1ex}
% a convex hole (cf.~similar~\eqref{eq:triangle}).
This suffices since $L$ is $x$-monotone.
Then Theorem~\ref{thm:MSOlin} finishes the argument.
\medskip

We can similarly handle the orthogonal terrain guarding problem
which is also NP-hard~\cite{DBLP:journals/jocg/BonnetG19}.
Another possible extension is to minimize the sum of weighted guards,
using a weighted variant of Theorem~\ref{thm:MSOlin}
(as in~\cite{cmr00}).
However, our approach to terrain guarding cannot be directly extended to
the traditional and more general {\em Art gallery} (guarding) problem
\cite{DBLP:journals/tit/LeeL86}, not even in the adjusted case when each
edge of the polygon is seen by a single vertex guard.
This is due to possible presence of ``blind spots'' in the interior of the
polygon which cannot be determined knowing just the order type $\Omega(P)$
and the boundary edges of the polygon on $P$.
Interested readers may find more in the Appendix.

\paragraph{Polygon visibility graph. }
As we have mentioned the Art gallery problem, we briefly add that people are
also studying problems related to the {\em visibility graph} of a given polygon $Q$.
The visibility graph of $Q$ has the same vertex set as $Q$ and the edges are
those line segments with ends in the vertices of $Q$ which are disjoint from 
the complement of the polygon.
We give the following toolbox:

\begin{theorem}\label{thm:ivisibility}\apxmark\
Assume a polygon $Q$ with vertex set $P$ given as a relational structure
consisting of the order type $\Omega(P)$ and the counter-clockwise
Hamiltonian cycle of the edges of $Q$.
Then the visibility graph of $Q$ has an MSO interpretation in $Q$.
\end{theorem}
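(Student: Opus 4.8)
The plan is to produce a single MSO formula $\psi(u,v)$ over the signature $\{\ordtype,\prebox{succ}\}$ of $Q$ --- where $\prebox{succ}$ denotes the given counter-clockwise Hamiltonian cycle of the edges --- such that $Q\models\psi(u,v)$ exactly when the open segment $\overline{uv}$ lies in the closed region bounded by $Q$, i.e.\ when $u$ and $v$ are mutually visible. Since the visibility graph has the same ground set $P$ as $Q$ and a single binary edge relation, the pair consisting of $\psi$ (and the trivial domain formula) is then precisely a simple MSO interpretation in the sense defined above, which is what the statement asks for. The basic building block is the standard order-type test for crossing: two points $x,y$ lie strictly on opposite sides of the line through $a,b$ iff
$$
\prebox{opp}(a,b,x,y)\equiv\big(\ordtype(a,b,x)\wedge\ordtype(b,a,y)\big)\vee\big(\ordtype(b,a,x)\wedge\ordtype(a,b,y)\big),
$$
so the open segments $\overline{uv}$ and $\overline{ab}$ cross properly iff $\prebox{cross}(u,v,a,b)\equiv\prebox{opp}(a,b,u,v)\wedge\prebox{opp}(u,v,a,b)$; note this is automatically false whenever the two segments share an endpoint.

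Next I would express that $\overline{uv}$ leaves $u$ into the interior of $Q$. The boundary neighbours of $u$ are definable from $\prebox{succ}$ (namely the $u^-,u^+$ with $\prebox{succ}(u^-,u)$ and $\prebox{succ}(u,u^+)$), and the sign $\ordtype(u^-,u,u^+)$ marks $u$ as convex while $\ordtype(u^+,u,u^-)$ marks it reflex. Because the interior lies to the left of the counter-clockwise boundary, the interior cone at a convex vertex is the \emph{intersection} of the two left half-planes, whereas at a reflex vertex it is their \emph{union}; binding $u^-,u^+$ existentially this yields
$$
\prebox{localin}(u,v)\equiv\exists u^-\exists u^+\Big[\prebox{succ}(u^-,u)\wedge\prebox{succ}(u,u^+)\wedge\big((\ordtype(u^-,u,u^+)\wedge\ordtype(u^-,u,v)\wedge\ordtype(u,u^+,v))\vee(\ordtype(u^+,u,u^-)\wedge(\ordtype(u^-,u,v)\vee\ordtype(u,u^+,v)))\big)\Big].
$$

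I also need to forbid $\overline{uv}$ from touching the boundary at an interior point, i.e.\ from passing through a third vertex. As betweenness of three collinear points is not visible to $\Omega$ directly, I would fix any off-line witness $t$ (one exists since $|P|\ge3$) and let $\prebox{between}(u,v,w)$ assert that $w$ is collinear with $u,v$ (that is, $\neg\ordtype(u,v,w)\wedge\neg\ordtype(v,u,w)$) while $\omega(u,w,t)=\omega(w,v,t)=\omega(u,v,t)\ne0$, which pins $w$ strictly inside $\overline{uv}$; this is a routine order-type encoding. Writing $\prebox{edge}(u,v)\equiv\prebox{succ}(u,v)\vee\prebox{succ}(v,u)$ for the boundary-edge case, the formula reads
$$
\psi(u,v)\equiv u\ne v\wedge\Big[\prebox{edge}(u,v)\vee\big(\prebox{localin}(u,v)\wedge\forall a,b\,(\prebox{succ}(a,b)\to\neg\,\prebox{cross}(u,v,a,b))\wedge\forall w\,\neg\,\prebox{between}(u,v,w)\big)\Big].
$$

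The real content, and the step I expect to be the main obstacle, is the geometric lemma that certifies $\psi$: in a simple polygon two non-adjacent vertices see each other iff the open segment between them enters the interior at $u$, crosses no boundary edge, and contains no other vertex. The ``only if'' direction is immediate; for ``if'' one uses that the boundary of $Q$ is a single Jordan curve, so a segment that starts in the interior and never meets the boundary (no proper edge crossing by $\neg\,\prebox{cross}$, no interior vertex contact by $\neg\,\prebox{between}$) cannot reach the exterior and hence stays in the closed polygon; in particular $\prebox{localin}(v,u)$ then holds automatically, so $\psi$ is semantically symmetric, as a visibility relation must be. The only remaining care is in the degenerate collinear configurations --- three collinear vertices, a vertex lying on $\overline{uv}$, or a boundary edge collinear with $\overline{uv}$ --- which are exactly what $\prebox{between}$ and the strictness built into $\prebox{opp}$ are designed to absorb; under a general-position assumption on $P$ these cases vanish and the equivalence is immediate.
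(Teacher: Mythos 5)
Your construction follows essentially the same route as the paper's: a simple MSO interpretation whose single edge formula combines (i) a local cone test at an endpoint, obtained from the two boundary edges incident to that vertex via the given Hamiltonian cycle, with (ii) a universally quantified ``no polygon edge properly crosses $\overline{uv}$'' test built from an order-type crossing predicate (your $\prebox{cross}$ is the paper's $\prebox{ecross}$ up to the handling of collinear cases). Your variant is in some respects tighter than the paper's sketch: you spell out the convex/reflex case split in the cone test, and you justify sufficiency by a Jordan-curve argument which also lets you impose the cone condition at $u$ only and deduce it at $v$, whereas the paper simply requires both conditions at both endpoints and asserts the converse.

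There is, however, one concrete point where your formula disagrees with the statement being proved. The paper defines a visibility edge as a segment \emph{disjoint from the complement of the polygon}, i.e.\ contained in the closed region bounded by $Q$; such a segment is allowed to touch the boundary, in particular to pass through a third vertex $w$ (a reflex vertex grazing the segment from the exterior side, or three collinear vertices along a straight stretch of the boundary). Your clause $\forall w\,\neg\prebox{between}(u,v,w)$ rejects every such pair, so on inputs with collinear triples --- which the order-type formalism of this paper treats as a first-class case, not an excluded degeneracy --- $\psi$ defines a proper subgraph of the visibility graph. Whether $u$ and $v$ see each other when some $w$ lies on the open segment depends on the local geometry at $w$, so the blanket rejection must be replaced by a per-$w$ test analogous to your $\prebox{localin}$ (checking that the segment direction at $w$ stays in the closed local cone on both sides); the collinear-boundary case also breaks your strict $\prebox{localin}$ at $u$. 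Your closing remark that these cases ``vanish under general position'' is accurate but does not discharge them, since the theorem carries no such hypothesis.
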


%  consists of a
% universe (a finite set) $U$ and a (finite) list of relations
% $R_1^{\cf S},\dots,R_a^{\cf S}$ over~$U$.
% For instance, for graphs, $U=V(G)$ is the vertex set and $R_1^{G}=E(G)$ 
% is the binary symmetric relation of edges of~$G$.
% The language of \CMSOi logic 
% of relational structures of the signature $\{R_1,\dots,R_a\}$
% is as in Definition~\ref{def:CMSO1graphs} with the predicates
% $R_1,\dots,R_a$ (instead of $\prebox{edge}$).
% The scope of Definition~\ref{def:interpretation} 
% of a simple graph interpretation $I=(\nu,\mu)$ is then naturally 
% generalized by allowing $\nu$ and $\mu$ to be \CMSOi formulae
% over relational structures of the signature $\sigma=\{R_1,\dots,R_a\}$.
% For each structure $\cf S$ of the signature $\sigma$,
% the interpretation $I(\cf S)$ is, in our case, a simple graph (again
% possibly with arbitrarily assigned vertex labels).

\section{Conclusions}
%%%%%%%%%%%%%%%%%%%%%%%%%%%%%
\label{sec:conclu}

We managed to show, in this limited space, only few example applications of
bounding the clique-width in efficient parameterized algorithms for geometric
point problems.
More examples of similar kind could be added but, as a future work, 
we would especially like to investigate possible applications to ``metric'' problems.
Of course, MSO logic of order types cannot express metric properties of a
point set, but it could be possible that in some problems the enumerative part of
Theorem~\ref{thm:MSOlin} provided us with a relatively short list of small
subconfigurations which would then be processed even by brute force,
resulting in a faster algorithm.
For instance, we suggest to investigate in this manner the problem of a 
minimum area triangle on a given point set, which is in general 3SUM-hard 
(that is, not believed to have a subquadratic algorithm).

Another possible extension would be to consider order types in dimension
$3$ (or higher), but then even a strictly convex point set could
easily have unbounded clique-width -- the quaternary relational structures of
such order types just seem to be too complex even in very simple cases.

Lastly, we mention another very natural question; can the clique-width of a
point configuration be at least approximated by 
an FPT algorithm with the width as the fixed parameter?
Such an approximation is possible in the case of graph clique-width \cite{os06,ho08},
thanks to the close relation of graph clique-width to rank-width and to
binary matroids.
Perhaps the natural correspondence of order types to oriented matroids could be of
some help in this research direction.

\subsubsection*{Acknowledgments. }
We would like to thank to Achim Blumensath and Bruno Courcelle for
discussions about the clique-width of relational structures.

\vfill
\bibliography{gtbib,geombib,interpretations}

%%%%%%%%%%%%%%%%%%%%%%%%%%%%%%%%%%%%%%%%%%%%%%%%%%%%%%%%%%%%%%%%%%%%%%%%%%%%%%%%%%%%%%%%%%

\newpage

\section*{Appendix}
\appendix

\section{Supplements for Section~\ref{sec:ordertypes}}
	%and Definition~\ref{def:cwpoints}}
%%%%%%%%%%%%%%%%%%%%%%%%%%%%%%%%%%%%%%%%%%%%%%%%%%%%%%%

\begin{proposition}[\ref{prop:unaryvs}]\label{prop:unaryvsA}
Let $P$ be an arbitrary finite set of points in a strictly convex position.
%\footnote{That is, in the convex hull of $P$ every point of $P$ is a vertex.}
Then the clique-width of $P$ is bounded by a constant, while the unary
clique-width of $P$ is unbounded.
\end{proposition}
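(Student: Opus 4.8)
The plan is to treat the two claims separately: an explicit width-$4$ expression for the upper bound on $\cwp(P)$, and a pigeonhole argument for the unboundedness of the unary clique-width.

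For the upper bound, I would first record the governing structural fact about convex position. Listing the points as $p_1,\dots,p_n$ in counter-clockwise order along the convex hull, one has that for indices $i<i'<j$ the triple $(p_i,p_{i'},p_j)$ is oriented counter-clockwise, and conversely every counter-clockwise triple is a cyclic rotation of some such $(p_i,p_{i'},p_j)$ with $i<i'<j$. Thus $\Omega(P)$ is exactly the union of cyclic closures of the triples $(p_i,p_{i'},p_j)$, $i<i'<j$. The construction then processes the points in this order using three labels: unary $1$ and $2$, and binary $3$. I maintain the invariant that before step $j$ the already-created points $p_1,\dots,p_{j-1}$ carry label $1$ and every pair $(p_i,p_{i'})$ with $i<i'<j$ carries the binary label $3$. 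At step $j$ I create $p_j$ with label $2$ (operation (w1), followed by the disjoint union (w2)), then apply (w4) to add the cyclic closures of all triples $(p_i,p_{i'},p_j)$ for which $(p_i,p_{i'})$ has label $3$ and $p_j$ has label $2$; since these pairs are exactly those with $i<i'<j$, this adds precisely the new counter-clockwise triples whose largest index is $j$. I then use (w3) to give the new pairs $(p_i,p_j)$, $i<j$, the binary label $3$, and finally relabel $p_j$ to $1$ via (w5), restoring the invariant. The labels used are two unary and one binary, so the width is $1+1+2=4$, proving $\cwp(P)\le 4$.

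For the lower bound I would show that $\ell$ unary labels cannot express $\Omega(P)$ once $|P|\ge 2\ell+1$. Consider the topmost disjoint-union operation (u2) in a hypothetical $\ell$-expression; above it no points are created and only relabellings (u4) and triple-additions (u3') occur. This union splits $P$ into two parts, one of which, say $A$, satisfies $|A|\ge\ell+1$, so by pigeonhole two distinct points $a,b\in A$ share the same label at the moment of the union. The crucial observation is that relabellings act uniformly on labels, so $a$ and $b$ keep equal labels throughout the remainder of the expression and are indistinguishable to every subsequent (u3'). Now pick any point $c$ in the other part: since $a,b,c$ first coexist in a common structure only after this union, the triples through $a,b,c$ can only be produced afterwards. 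But any (u3') that would add $(a,b,c)$ is parametrized by the labels of its arguments, and because $a,b$ carry equal labels it necessarily adds $(b,a,c)$ as well. As $P$ is in strictly convex, hence general, position, exactly one of $(a,b,c),(b,a,c)$ lies in $\Omega(P)$, so no expression can add one without the other --- a contradiction. Hence the unary clique-width exceeds $\ell$, and since $P$ may be taken arbitrarily large it is unbounded over convex configurations.

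I expect the main obstacle to lie in making the lower-bound argument fully rigorous, specifically in tracking labels across the arbitrary sequence of relabellings and triple-additions that may follow the topmost union. One must argue carefully that no operation whatsoever can separate the pair $a,b$ once they agree on a label, and that the triples linking $a,b$ to the opposite side of the union genuinely could not have been produced earlier. By comparison the upper bound is routine once the order-type characterization of convex position is in hand; the only care needed there is to verify the loop invariant that all pairs $(p_i,p_{i'})$ with $i<i'<j$ already bear the binary label $3$, and that the cyclic closures added over all steps cover $\Omega(P)$ exactly.
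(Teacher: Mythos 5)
Your proposal is correct and follows essentially the same route as the paper's own proof: the same width-$4$ expression built by sweeping the points in counter-clockwise order with two unary labels and one binary label (the paper merely applies (w\ref{it:wpairl}) before (w\ref{it:wtriple}) within each iteration, which changes nothing), and the same pigeonhole argument at the last union operation, where two same-labelled points $a,b$ on one side and any $c$ on the other force $(a,b,c)$ and $(b,a,c)$ to be added together, contradicting strict convexity. The labelling-persistence worry you flag is resolved exactly as the paper does it --- relabellings act uniformly on labels, so once $a$ and $b$ agree they agree forever.
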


\begin{proof}
We show that $\cwp(P)=4$ by constructing an expression $\cal E$
for $\Omega=\Omega(P)$ using unary labels $1$ and $2$ and a binary label~$3$.
Let the points of $P$ be enumerated as $p_1,p_2,\ldots,p_n$ in
the counter-clockwise order (the starting point does not matter).
We create a starting point $p_1$ of label $1$, and for $k=2,3,\ldots,n=|P|$ 
we iterate this sequence of operations in our expression $\cal E$:
\begin{itemize} \vspace*{-1ex}
\item (w\ref{it:wcreate}) create a new point $p_k$ of label $2$,
and \\(w\ref{it:wdisju}) make the union of previous $\{p_1,\ldots,p_{k-1}\}$
with $\{p_k\}$;
\item (w\ref{it:wpairl}) for all points $a,c$ such that $a$ is of label $1$ and $c$
of label $2$, give the pair $(a,c)$ label $3$;
\item (w\ref{it:wtriple}) for all distinct points $a,b,c$ such that $c$ is
of label $2$ and $(a,b)$ is of label~$3$, add to $\Omega$ the cyclic closure
of $(a,b,c)$;
\item (w\ref{it:wrelab}) relabel all points of label $2$ to label $1$.
\end{itemize}\vspace*{-1ex}

Notice that during every iteration, the only point $c$ of the label $2$ is
$c=p_k$, and after iteration number $k$, the binary label $3$ is given exactly to
pairs $(p_i,p_j)$ such that $1\leq i<j\leq k$.
Hence the operation (w\ref{it:wtriple}) adds triples $(p_i,p_j,p_k)$ and
their cyclic closure, exactly when $1\leq i<j<k$, which all indeed belong to
$\Omega(P)$.
On the other hand, in every counter-clockwise triple $(p_i,p_j,p_k)$ of $P$ we may
assume $k$ is the largest index, and then $i<j<k$ by our indexing of $P$.
So the pair $(p_i,p_j)$ gets label $3$ in above iteration number $j$, and then the
triple $(p_i,p_j,p_k)$ is added to $\Omega$ in iteration number $k$.
Therefore, the value $\Omega$ of $\cal E$ satisfies $\Omega=\Omega(P)$.

Next, in order to prove that the unary clique-width of $P$ is unbounded,
we show that every expression for $\Omega(P)$ which uses only unary
labels -- see (u\ref{it:labeledge}') -- must use at least $\lceil n/2\rceil$
labels where $n=|P|$. 
For a contradiction, let $\cal E$ be an expression with $\ell$ unary labels
whose value is $\Omega(P)$, and assume $n\geq 2\ell+1$.
% where $\ell$ is the number of unary labels in an expression $\cal E$ 
Imagine the last application of the union operation (u\ref{it:uunion}),
which makes the union of the values of subexpressions $\ca E_1$ and $\ca E_2$.
Since $\ca E_1$ and $\ca E_2$ together make all $n$ points of $P$, 
one of them, say $\ca E_1$, makes at least $\ell+1$ of the points.
By the pigeon-hole principle, some two distinct points $a,b$ created in 
$\ca E_1$ end up with the same label in the value of $\ca E_1$.
(We remark that $a$ and $b$ may be created with different labels, and
possibly relabelled several times, but we speak about the final label they
get within $\ca E_1$.)

Let $c$ be any point created by $\ca E_2$.
Among the two triples $(a,b,c)$ and $(b,a,c)$, exactly one is
counter-clockwise in $P$, say $(a,b,c)$.
The triple $(a,b,c)$ can be created only after the union of $\ca E_1$ and
$\ca E_2$.
However, whenever an application of the operation (u\ref{it:labeledge}')
creates the triple $(a,b,c)$, since the labels of $a$ and $b$ are already
the same and must stay the same, this application adds also the triple
$(b,a,c)$, which is in a contradiction to $\Omega(P)$ being the value of $\ca E$.
\qed\end{proof}

\section{Supplements for Section~\ref{sec:MSO}}
%%%%%%%%%%%%%%%%%%%%%%%%%%%%%%%%%%%%%%%%%%%%%%%%%%%%%%%

\paragraph{The concept of transductions. }
We provide formal details on the concept of an MSO transduction, which were
skipped for simplicity in the main paper.

Consider relational signatures $\sigma=\{R_1,\dots,R_q\}$ and $\tau=\{R'_1,\dots,R'_{t}\}$.
Recall that a (simple) {MSO interpretation of $\tau$-structures in
$\sigma$-structures} is a $t$-tuple of MSO formulas
$\Psi=(\psi_i: 1\leq i\leq t)$ of the signature $\sigma$,
where the number of free variables of $\psi_i$ equals the arity $a_i$ of $R'_i$.

A {\em basic MSO transduction} $\delta_0$ of a relational $\sigma$-structure~$S$
is a $(t+2)$-tuple $(\chi,\nu,\psi_1,\ldots,\psi_t)$ of MSO formulas of the
signature $\sigma$, where $\chi$ is nullary, $\nu$ is unary and $\psi_i$ are
as in an MSO interpretation $\Psi$ above.
The outcome of $\delta_0(S)$ is undefined (empty) if $S\not\models\chi$, and otherwise,
$\delta_0$ maps the structure $S$ on the ground set $U$ into a single $\tau$-structure 
$T$ on the ground set $U'=\{v\in U\>|~S\models\nu(v)\}$.
The relational symbols of $\tau$ are interpreted on $U'$ as follows;
for each $1\leq i\leq t$, we have $(x_1,\ldots,x_{a_i})\in {R'_i}^{T}$ $\iff$
$S\models \psi_i(x_1,\ldots,x_{a_i})$.

The {\em$m$-copy operation} maps a structure $S$ on the ground set $U$
to the relational structure $S^m$ on the ground set $U^m=U\times\{1,\dots,m\}$, 
such that the subset $U\times\{i\}$
for each $i=1,2\dots,m$ induces a copy of the structure~$S$
(there are no tuples between distinct copies).
Additionally, $S^m$ is equipped with a binary relation $\sim$ and 
unary relations $Q_1,\dots,Q_m$ such that;
$(u,i)\sim(v,j)$ for $u,v\in U$ iff $u=v$, and $Q_i=\{(v,i): v\in U\}$.

The {\em$p$-parameter expansion} maps a structure $S$ to the set of all
structures which result by an expansion of $U$ by $p$ unary predicates
(as arbitrary labels).

Altogether, a many-valued map $\tau$ is an {\em MSO transduction} if it is
$\delta=\delta_0\circ\gamma\circ\varepsilon$ where $\delta_0$ is a basic
MSO transduction, $\gamma$ is a $m$-copy operation for some $m$,
and $\varepsilon$ is a $p$-parameter expansion for some~$p$.
% Specially, if $m=1$ then we call $\tau$ a {\em non-copying transduction}.
% \end{defn}
% Note that, in this formal setting, the formulas of $\tau_0$ 
% may also refer to the relations $\sim$ and $Q_i$ established by 
% the copy operation~$\gamma$.

We remark, once again, that the result of a transduction $\delta$ of one structure is
generally a set of structures, due to the involved $p$-parameter expansion.
For a class $\mathcal{C}$ of structures, the result of a {\em transduction $\delta$ of
the class $\mathcal{C}$} is the union of the particular transduction results,
precisely, $\delta(\mathcal{C}):=\bigcup_{S\in\mathcal{C}}\delta(S)$.

\paragraph{Theorem~\ref{thm:cwtransdu} and Definition~\ref{def:cwpoints}. }
We also add more details about the characterization of clique-width in
Theorem~\ref{thm:cwtransdu} and the related aspects of
Definition~\ref{def:cwpoints}.

Blumensath~and Courcelle in~{\cite[Proposition~27]{DBLP:journals/iandc/BlumensathC06}}
claim, in particular, logical equivalence of the following two properties of a
class $\ca C$ of relational structures:
\begin{itemize}\vspace*{-1ex}
\item \cite[Proposition~27\,(iii)]{DBLP:journals/iandc/BlumensathC06},
$\ca C$ is the image of an MSO transduction of a regular subclass of the
class of finite trees (implicitly directed and labelled),
\item \cite[Proposition~27\,(ii)]{DBLP:journals/iandc/BlumensathC06},
$\ca C$ is the set of values of expressions consisting of the disjoint union
and of all quantifier-free operations over a fixed reference signature
(wrt.~$\ca C$).
%, this signature implicitly bounds the clique-width of~$\ca C$).
\end{itemize}\vspace*{-1ex}
The reference signature in the latter point is actually a set of multi-ary
labels used in the expressions whose values form~$\ca C$
(hence the direct correspondence with expressions and their width
in Definition~\ref{def:cwpoints}).

To be formally precise with Theorem~\ref{thm:cwtransdu} and our Corollary~\ref{cor:cwtransdu},
we hence need to formulate the following claim relating the latter point
back to our Definition~\ref{def:cwpoints}:

\begin{lemma}
A class $\cf S$ (of signature~$\sigma$) of order types of point configurations
is of bounded clique-width, if and only if
there exists a reference signature $\tau\supseteq\sigma$, such that
$\cf S$ is the set of values (restricted to signature~$\sigma$) of expressions 
consisting of the disjoint union and of all quantifier-free operations over~$\tau$.
\end{lemma}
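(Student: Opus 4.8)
The plan is to prove the two implications separately, each time exploiting that the operations (w1)--(w5) of Definition~\ref{def:cwpoints} are a concrete instance of the much more general repertoire ``disjoint union together with all quantifier-free operations over a reference signature''. Throughout, the reference signature $\tau$ will consist of the target symbol $\ordtype$ together with a finite supply of auxiliary unary and binary label symbols, and the phrase ``restricted to $\sigma$'' corresponds exactly to the fact that in Definition~\ref{def:cwpoints} the auxiliary labels are discarded once the expression is evaluated.

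For the forward implication I would assume that every $S\in\cf S$ has a width-at-most-$\ell$ expression built from (w1)--(w5). Since the width is the sum of arities of the occurring labels, at most $\ell$ unary and at most $\lfloor \ell/2\rfloor$ binary labels ever appear, and collecting them into one fixed signature gives $\tau\supseteq\sigma$. It then remains to exhibit each operation as disjoint union or a quantifier-free operation over $\tau$: operation (w2) is literally the disjoint union and (w1) produces a one-element constant, while each of (w3), (w4), (w4') and (w5) redefines a single relation by a quantifier-free formula over the current relations. For instance (w3) sets the new binary label $k$ to $B_k(x,y)\vee(U_i(x)\wedge U_j(y))$, operation (w4) disjoins the old $\ordtype$ with the three cyclic rotations $\big(B_k(x,y)\wedge U_i(z)\big)\vee\big(B_k(y,z)\wedge U_i(x)\big)\vee\big(B_k(z,x)\wedge U_i(y)\big)$, and (w5) merges label $i$ into $j$ by $U_j(x)\vee U_i(x)$ (respectively its binary analogue) while emptying $i$. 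All these formulas are quantifier-free, so $\cf S$ is a set of $\sigma$-restricted values of disjoint-union-plus-quantifier-free expressions over~$\tau$.

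For the backward implication I would deliberately avoid simulating the quantifier-free operations one by one, because such an operation may redefine $\ordtype$ in terms of its own previous value whereas (w4) only ever adds triples. Instead I would route through the equivalence of items (ii) and (iii) of Blumensath and Courcelle's Proposition~27: assuming (ii), that proposition yields that $\cf S$ is the image of an MSO transduction of a class of finite labelled trees, and I would build a bounded-width Definition~\ref{def:cwpoints} expression from this data. The tree supplies the nesting of disjoint unions (w2); the transduction parameters together with the unary restriction formula dictate which unary label each point receives at creation by (w1); and along the construction I would maintain the bounded information of the MSO type up to quantifier rank $q$ of every point and every ordered pair, where $q$ is the quantifier rank of the interpreting formulas.

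The crux, and the step I expect to be hardest, is realizing the ternary interpreting formula $\psi_\Omega(x,y,z)$ for $\ordtype$ using only (w3), (w4) and (w4'). For any triple I would consider the lowest tree node whose subtree contains all three of its points; this is a union node at which two of the points lie on one side and the third on the other. By the Feferman--Vaught composition property of MSO, the truth of $\psi_\Omega$ on the triple is then a function of the rank-$q$ types of the two sides together with the placement of the points, hence of the binary label carried by the pair of co-located points and the unary label carried by the lone point. I would therefore assign the pair its binary label by (w3) and fire (w4)/(w4') precisely for those label combinations that make $\psi_\Omega$ true, so that each triple of $\ordtype$ is added exactly once and the monotonicity of (w4) is no obstacle. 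Since there are only boundedly many such types, the number of unary and binary labels, and hence the width, stays bounded; this is exactly the role for which binary labels were introduced in Definition~\ref{def:cwpoints}, whose necessity is witnessed by the collapse of the unary variant in Proposition~\ref{prop:unaryvs}. Combining the two implications with Proposition~27 then yields Theorem~\ref{thm:cwtransdu} and Corollary~\ref{cor:cwtransdu} for order types.
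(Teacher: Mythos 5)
Your forward direction coincides with the paper's, which dismisses it in one line precisely because the operations (w1)--(w5) are quantifier-free; spelling out the defining formulas as you do is harmless. The backward direction, however, takes a genuinely different route. The paper argues directly on the given quantifier-free expression: since a quantifier-free operation decides a triple $(a,b,c)$ using only the labels it induces on the set $\{a,b,c\}$, one restricts to the subexpression acting on these three elements, notes that its last disjoint union splits them as a pair plus a singleton, and encodes the whole decision history into one binary label on the pair and one unary label on the singleton. This also disposes of the worry you cite as your reason for avoiding a direct simulation --- that a quantifier-free operation may overwrite $\ordtype$ non-monotonically --- because only the \emph{final} verdict on each triple is ever materialised by (w\ref{it:wtriple})/(w\ref{it:wtriple}'). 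You instead invoke Proposition~27 (ii)$\Rightarrow$(iii) of Blumensath--Courcelle and rebuild a Definition~\ref{def:cwpoints} expression from an MSO transduction of trees by type composition. Both arguments converge on the same combinatorial core (pair plus singleton at the last separating union, hence the need for binary labels), but the paper's is self-contained and elementary, whereas yours imports heavier machinery; in exchange it simultaneously yields the ``transduction of trees $\Rightarrow$ bounded clique-width'' half of Theorem~\ref{thm:cwtransdu} rather than delegating it entirely to the cited equivalence.

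One step of yours needs repair. The truth of $\psi_\Omega(a,b,c)$ in the whole tree $T$ is \emph{not} a function of the rank-$q$ types of the two subtrees at the lowest common node $v$ alone: by Feferman--Vaught it also depends on the type of the context of $v$ in $T$ (and, for trees of unbounded branching, the triple may split $1{+}1{+}1$ at $v$, which requires a binarisation first). The patch is standard and free of cost in width --- since the expression is constructed offline, you may look up the context type of $v$ and fire (w\ref{it:wtriple})/(w\ref{it:wtriple}') exactly for the label combinations that the composition function maps to true \emph{for that context} --- but as written the claim is false, and this is the very point where the argument could silently go wrong.
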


\begin{proof}
The forward implication is trivial since the operations in
Definition~\ref{def:cwpoints} are quantifier-free.

For the backward implication, the following is a key idea:
since only quantifier-free operations are used, the decision on an
order-type triple of a member $R\in\cf S$, say triple $(a,b,c)$,
depends only on the labels induced on the subset $\{a,b,c\}$ of the ground
set of~$R$.
This finding inductively extends to all operations in the respective
expression $\ca E$ valued~$R$, which lead to a decision on the triple $(a,b,c)$.
We hence focus on a ``subexpression'' $\ca E_0$ of $\ca E$ which consists
only of the operations having domain in $\{a,b,c\}$.

Consider the last disjoint union operation in $\ca E_0$ which, up to
symmetry, made a disjoint union of substructures on $\{a,b\}$ and $\{c\}$.
The final decision on $(a,b,c)$ depends only on the labels on $\{a,b\}$ and
on $\{c\}$ just before the union operation (further relabellings cannot
bring new piece of information).
Hence we can encode the information leading to a decision on $(a,b,c)$
in a unary label on $c$ and a binary label on $(a,b)$ (or symmetrically on $(b,a)$).
Recursively, we argue that binary labels on $(a,b)$ depend only on the unary
labels of $a$ and~$b$.
Therefore, operations as in Definition~\ref{def:cwpoints} are sufficient to
construct the considered order-type structure~$R$ under suitable signature.
\qed\end{proof}

\section{Supplements for Section~\ref{sec:assorted}}
%%%%%%%%%%%%%%%%%%%%%%%%%%%%%%%%%%%%%%%%%%%%%%%%%%%%%%%

\begin{theorem}[\ref{thm:cwbasic2}]\label{thm:cwbasic2A}
Let $P$ be a point configuration, $P_0\subseteq P$ and $d=|P\setminus P_0|$.
\\a) If all points of $P_0$ are collinear, then the clique-width of $P$ is
in $\ca O(d)$.
\\b) Assume the points of $P_0$ are in a strictly convex position.
If $d\leq1$, then the clique-width of $P$ is bounded (by a constant).
On the other hand, there exist examples already with $d=2$ and unbounded
clique-width of $P$.
\end{theorem}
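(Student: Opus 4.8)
The plan is to prove the two upper bounds (a) and (b,$d\le1$) by exhibiting explicit expressions in the sense of Definition~\ref{def:cwpoints}, and to prove the $d=2$ lower bound by an interpretation argument via Corollary~\ref{cor:cwtransdu}. For part~(a) I would proceed in three stages while keeping the total label-arity in $\ca O(d)$. First I create the $d$ points of $P\setminus P_0$, giving each a private unary label (this already uses the $\ca O(d)$ budget), and realize every counter-clockwise triple among them: to add one triple $(e_a,e_b,e_c)$ I apply (w\ref{it:wpairl}) to give the pair $(e_a,e_b)$ a working binary label, then (w\ref{it:wtriple}) with apex $e_c$, and finally relabel the working label to a fixed ``trash'' binary label so it can be reused; privacy of the unary labels guarantees each such step touches exactly the intended triple, so a constant number of binary labels suffices. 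Next I insert the collinear points of $P_0$ left to right using only two unary labels (``new''/``placed'') and one binary label $L$ recording left-to-right ordered pairs inside $P_0$: upon creating $q_j$ I first add, for every pair $e_a,e_b$, the unique correctly oriented triple on $\{e_a,e_b,q_j\}$ by (w\ref{it:wtriple}) or (w\ref{it:wtriple}') with apex $q_j$, then extend $L$ to all $(q_i,q_j)$ with $q_i$ placed, and relabel $q_j$ to ``placed''.

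The crucial observation closing part~(a) is that for collinear $q_i,q_{i'}$ ordered left to right and any external $e_a$, the triple $(q_i,q_{i'},e_a)$ is counter-clockwise exactly when $e_a$ lies above the common line and clockwise when below; hence after all of $P_0$ is inserted a single application of (w\ref{it:wtriple}) (respectively (w\ref{it:wtriple}')) with label $L$ and apex $e_a$, chosen by the side of $e_a$, produces all remaining triples having two points in $P_0$. Since $P_0$ is collinear there are no triples internal to $P_0$, so the value of the expression is exactly $\Omega(P)$, and its width is $d+\ca O(1)$.

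For part~(b) with $d=0$ the configuration is strictly convex and Proposition~\ref{prop:unaryvsA} gives a constant bound. For $d=1$ I would reuse the convex construction of Proposition~\ref{prop:unaryvsA} to obtain all triples internal to $P_0$ and treat the triples involving the extra point $e$ separately. Ordering the convex points by angle around $e$ as $r_1,\dots,r_n$, a pair yields a counter-clockwise triple $(r_a,r_b,e)$ exactly when $r_b$ lies within the open half-turn counter-clockwise from $r_a$. The obstacle is that this ``within half a turn'' window slides as the sweep advances and is \emph{not} simply ``all earlier points'', so a naive one-pass construction fails; this is the technical heart of the upper-bound side. I would resolve it by splitting the angular sweep at an antipodal pair and growing the two opposite arcs simultaneously, so that inside each arc the side of $e$, and hence the orientation, is monotone, and a constant number of binary labels records precisely the pairs whose $e$-triple must be added.

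Finally, for the $d=2$ lower bound I would exhibit a family of configurations in which arbitrarily large square grids are MSO-interpretable, and then apply Corollary~\ref{cor:cwtransdu} contrapositively using that grids have unbounded clique-width. Concretely, I place $P_0$ in convex position together with two interior points $m,n$ so that the lines through $m$ and through $n$ realize prescribed collinear triples $p_i,m,q$ and $q,n,p_j$ which chain a fixed ``shift'' carrying each boundary point to another a constant number of steps away (the dashed links in Figure~\ref{fig:assorted-cw}(b)). Collinearity is MSO-definable (a triple $(x,y,z)$ is collinear iff neither $\ordtype(x,y,z)$ nor $\ordtype(y,x,z)$ holds), and the interior points as well as convex-hull neighbours are MSO-definable (cf.~\eqref{eq:convhull}); so the horizontal adjacencies (consecutive hull vertices) together with the vertical adjacencies (the collinearity shift through $m,n$) interpret an unbounded grid. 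The main obstacle is purely geometric: realizing, for every grid size, a genuine point set in which exactly the intended collinearities hold and no spurious degeneracy corrupts the interpretation. Once such realizations are fixed, bounded clique-width of $P$ would force bounded clique-width of the interpreted grids by Corollary~\ref{cor:cwtransdu}, a contradiction, so the clique-width of these $d=2$ configurations is unbounded.
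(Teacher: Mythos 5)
Your proposal is correct and follows essentially the same route as the paper's own proof in all three parts: the same three-stage expression with private labels for $P\setminus P_0$ and a side-of-line case split for part (a), the same two-opposite-arcs simultaneous sweep for $d=1$, and the same grid interpretation via collinear triples through the two interior points $m,n$ combined with Corollary~\ref{cor:cwtransdu} for the $d=2$ lower bound.
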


\begin{proof}
(a)
We show that $\cwp(P)=d+8$ by constructing an expression $\ca E_a$
for $\Omega_a=\Omega(P)$ using $d+2$ unary labels and three binary labels.
See Figure~\ref{thm:cwbasic2}(a).
We choose a direction of the line $\ell_0$ through $P_0$ and enumerate the points
of $P_0$ in the direction of $\ell_0$ as $p_1,p_2,\ldots,p_n$,
and the points of $P\setminus P_0$ as $q_1,\ldots,q_d$.

As for the expression $\ca E_a$, we first create independently all
points $q_1,\ldots,q_d$, each with its own unique label.
Then we use auxiliary two binary labels to add the needed order triples for
them, based on the unique unary labels.

In the second stage, we create the points $p_1,p_2,\ldots,p_n$, one by one,
at step $i$ assuming that $p_1,\ldots,p_{i-1}$ are labelled $1$
and new $p_i$ gets label $2$.
At this step we give binary label $3$ to all pairs which are of labels
$1$ and $2$ in this order (that is, to the pairs $(p_j,p_i)$ for all $j<i$),
and we also add to $\Omega_a$ all needed order triples involving $p_i$
and two points of $P\setminus P_0$ (since we again have unique labels for such triples).

In the third final stage,
we add the order triples $(q_k,p_j,p_i)$, such that $(p_j,p_i)$ is of the
binary label $3$ and $q_k$ (which still holds its unique label) is in the
counter-clockwise orientation from the line $\ell_0$.
We analogously add the triples $(q_k,p_i,p_j)$ if $q_k$ is clockwise from $\ell_0$.

\medskip(b) Case of $d=1$.
We enumerate the points of $P_0$ in the clockwise orientation as
$p_1,p_2,\ldots,p_n$. Let $P\setminus P_0=\{m\}$.
Let $k$ be the least index such that $(p_i,m,p_k)$ is counter-clockwise.
We denote by $q_1=p_i$, $q_2=p_{i+1}$, $\ldots$
We now use the same construction of an expression as in the proof of
Proposition~\ref{prop:unaryvs}, but now simultaneously applied to
$p_1,p_2,\ldots$ and to $q_1,q_2,\ldots$ such that we always process
together $p_i$ and $q_j$ of the least index such that $(p_i,m,q_j)$ 
stays counter-clockwise.
This results in an expression of width $8$ which correctly adds also the
order triples involving the ``middle'' point $m$.

\medskip(b) Case of $d=2$.
We give a construction, sketched in Figure~\ref{thm:cwbasic2}(b),
of such an order type of unbounded clique-width.
Consider the points of $P\setminus P_0=\{m,n\}$ and the points of $P_0$
named in the counter-clockwise direction $p_0,p_1,\ldots,q_0,q_1,\ldots$
We choose an arbitrary $k$ and place the points of $P_0$ on the convex hull
such that we have collinear triples $(p_i,m,q_i)$ and $(p_{i+k},n,q_i)$
for $i=0,1,2,\ldots,k^2-1$.
No other triples in the construction are collinear.

Within the order type $\Omega(P)$, we can uniquely identify with MSO the
points $m,n$ -- as those two belonging to the convex hull of the remaining points.
We can then write formulas $\prebox{collin}(x,y,z)\equiv \neg\ordtype(x,y,z)
 \wedge\neg\ordtype(y,x,z)$, and
$\gamma_k(x,y)\equiv \exists z \prebox{collin}(x,m,z)\wedge
\prebox{collin}(y,n,z)$.
Then $\gamma_k$ relates $p_i$ to $p_j$ (i.e.,
$\Omega(P)\models\gamma_k(p_i,p_j)\vee\gamma_k(p_j,p_i)$\,) if and only if
$j=i+k$ or vice versa.
We can also in MSO describe the successor relation on the boundary of the convex
hull of $P_0$, that is the pairs $(p_i,p_{i+1})$, cf.~\eqref{eq:convhull}.
In this way we get an MSO interpretation of the $k\times k$ square grid
graph in $\Omega(P)$.

Consequently, in such a suitably constructed set $P$, we can interpret an
arbitrarily large square grid graph on the vertex set
$\{p_0,p_1,\ldots,p_{k^2-1}\}$.
Since the squre grid is a folklore basic example of unbounded
clique-width~\cite{ce12}, we get from Corollary~\ref{cor:cwtransdu}
that the clique-width of such configurations $P$ (with $d=2$) is unbounded.
\qed\end{proof}

\medskip

\begin{theorem}[\ref{thm:convparti}]\label{thm:convpartiA}
Assume a point set $P$ given alongside with a clique-width expression of width $d$.
The {\sc Minimum convex partition} problem of $P$ into $\leq k$ convex faces
is solvable in linear FPT time with respect to the parameter~$d+k$.
\end{theorem}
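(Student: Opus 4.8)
The plan is to phrase the entire decision problem as a single MSO sentence carrying $k$ existential set quantifiers and then to invoke Theorem~\ref{thm:MSOlin}. Because MSO cannot quantify over families of sets, fixing $k$ as part of the parameter is precisely what lets us name the $k$ prospective convex faces explicitly as free set variables $X_1,\ldots,X_k$; this is the reason the parameter is $d+k$ and not merely~$d$. Using the convex-hole formula $\prebox{convhole}(X)$ of~\eqref{eq:convhole}, the top-level sentence I would write is
$$
\exists X_1,\ldots,X_k\> \Big[\, \bigwedge\nolimits_{1\leq i\leq k}
\prebox{convhole}(X_i) \;\wedge\; \prebox{convpartition}(X_1,\ldots,X_k) \,\Big],
$$
where the subformula $\prebox{convpartition}$ must certify that the hulls $\prebox{conv}(X_1),\ldots,\prebox{conv}(X_k)$ tile $\prebox{conv}(P)$. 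Once this sentence is in hand, Theorem~\ref{thm:MSOlin} decides it in time linear in $|P|$ from the supplied clique-width expression, with the multiplicative constant depending on $d$ and~$k$.

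First I would assemble the geometric primitives from which $\prebox{convpartition}$ is built, each of them quantifier-local and hence MSO-expressible over $\ordtype$. A pair $(x,x')$ is a \emph{boundary edge} of $\prebox{conv}(X)$ exactly when $x,x'\in X$ and every other point of $X$ lies weakly on one side, i.e.\ $\forall z\in X\,\neg\ordtype(x',x,z)$; this is the consecutive-hull-vertex condition already isolated for $x,x'$ in~\eqref{eq:convhull}, and taking $X=P$ recovers the outer edges of $\prebox{conv}(P)$. Two segments $\overline{ab}$ and $\overline{cd}$ \emph{cross} iff $a,b$ separate $c,d$ and symmetrically, which is the standard order-type test: the orientations of $abc$ and $abd$ differ and those of $cda$ and $cdb$ differ.

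Next I would encode the two partition requirements stated in the proof outline. The first, that the boundaries of distinct $\prebox{conv}(X_i)$ and $\prebox{conv}(X_j)$ never cross, is a universal closure of the crossing test over all pairs of boundary edges and guarantees pairwise disjoint interiors. The second, that each boundary edge of any $\prebox{conv}(X_i)$ is simultaneously a boundary edge of \emph{exactly one} $\prebox{conv}(X_j)$ with $j\neq i$ or else of $\prebox{conv}(P)$, forces the faces to meet edge-to-edge and rules out uncovered regions inside $\prebox{conv}(P)$. Combined with each $X_i$ being a convex hole (so no point of $P$ hides in a face's interior), these conditions say precisely that the $\prebox{conv}(X_i)$ form a convex partition.

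The hard part will be proving that these two conditions are jointly necessary and sufficient for a genuine convex partition, and formalising the second one cleanly: I must count, in the MSO-admissible sense of ``exactly one index $j$'' over the fixed set $\{1,\ldots,k\}$, the faces incident to each interior edge, and then argue that non-crossing together with exact edge-matching excludes both gaps and overlaps. I would verify the sufficiency direction by a planar-subdivision (Euler-type or sweep) argument, confirming that interior-disjoint convex pieces glued edge-to-edge, whose unmatched edges all lie on $\partial\prebox{conv}(P)$, can only tile $\prebox{conv}(P)$. The remaining steps --- bounding the formula size by a function of $k$ and reading linear running time off Theorem~\ref{thm:MSOlin} --- are routine.
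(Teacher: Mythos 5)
Your proposal matches the paper's proof essentially step for step: the same top-level sentence with $k$ existential set variables constrained by $\prebox{convhole}$, the same MSO primitives for hull boundary edges and segment crossings (the paper's $\prebox{bdedge}$ and $\prebox{ecross}$), the same two geometric conditions (non-crossing boundaries and each boundary edge matching exactly one other face or $\prebox{conv}(P)$), finished by invoking Theorem~\ref{thm:MSOlin}. The only differences are cosmetic: the paper additionally imposes $X_i\not\subseteq X_j$ for $i\neq j$ to exclude degenerate repeated faces, and it dismisses the sufficiency of the conditions as a ``routine combination,'' whereas you explicitly flag that step as requiring a planar-subdivision argument.
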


\begin{proof}
Let $\prebox{convhole}(X)$ denote the MSO formula \eqref{eq:convhole}.
We describe the existence of a partition into $k$ convex faces as follows
$$
\exists X_1,\ldots,X_k\> \left[ \bigwedge\nolimits_{1\leq i \leq k}
	\prebox{convhole}(X_i)  ~\wedge~ \prebox{convpartition}(X_1,\ldots,X_k)  \right]
$$
where the subformula $\prebox{convpartition}$ checks whether the convex
hulls of the sets $X_i$ partition $\prebox{conv}(P)$.
For the latter, we will test the following three conditions:
\begin{itemize}\vspace*{-1ex}
\item for $1\leq i,j\leq k$, $i\not=j$, we have $X_i\not\subseteq X_j$,
\item the boundaries of the polygons $\prebox{conv}(X_i)$ and $\prebox{conv}(X_j)$
do not cross (neither in a vertex, nor in the interior
of an edge), and
\item every boundary edge of each $\prebox{conv}(X_i)$ is, at the same time, a
boundary edge of exactly one other of $\prebox{conv}(X_j)$ or of $\prebox{conv}(P)$.
\end{itemize}\vspace*{-1ex}

We hence first need to define in MSO logic when two points $x,y$ of a
strictly convex set $Y\subseteq P$ form a boundary edge of $\prebox{conv}(Y)$ in the
counter-clockwise orientation from $x$ to $y$.
Analogously to \eqref{eq:convhull}, we write this as
$$
\prebox{bdedge}(Y,x,y)\equiv\>
  x,y\in Y \wedge x\not=y \,\wedge 
    \forall z \big[ (z\in Y\wedge z\not=x,y) \to \ordtype(x,y,z) \big]
.$$

Second, we define in MSO the fact that two straight line segments, $xx'$ and
$yy'$ with distinct ends, cross each other in an interior point of both;
\begin{eqnarray*}
\prebox{ecross}(x,x',y,y')\equiv && \big(
  \ordtype(x,y,y') ~\not\!\!\!\longleftrightarrow \ordtype(x',y,y')
	\big) \\ &&\wedge\> \big(
  \ordtype(y,x,x') ~\not\!\!\!\longleftrightarrow \ordtype(y',x,x')
\big)
.\end{eqnarray*}
The rest of the definition of $\prebox{convpartition}$ 
is a routine combination of the presented MSO formulas.

Finally, we finish by an application of the decision version of Theorem~\ref{thm:MSOlin}.
\qed\end{proof}

\medskip

\begin{theorem}[\ref{thm:terraing}]\label{thm:terraingA}
Assume a polygonal terrain $L$ given alongside with a clique-width expression
of width $d$ (defining both the successor relation and the order type of 
the vertices, cf.~end of Section~\ref{sec:ordertypes}).
The {\sc Segmented terrain guarding} problem of $L$
is solvable in linear FPT time with respect to the parameter~$d$.
\end{theorem}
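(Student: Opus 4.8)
The plan is to express segmented-guarding validity as an MSO property of the given relational structure (the order type $\ordtype$, the binary successor relation $\prebox{succ}$, and the unary predicate $P_1$ marking admissible guards) and then to optimise over it using Theorem~\ref{thm:MSOlin}. Concretely, I would build a formula $\prebox{seguard}(X)$ that holds exactly when $X\subseteq P_1$ and every edge $\overline{p_ip_{i+1}}$ of $L$ is seen ``from above'' by a single guard of $X$; the segmented guarding number is then the minimum $|X|$ satisfying $\prebox{seguard}(X)$, which Theorem~\ref{thm:MSOlin} delivers in linear FPT time (the minimum being obtained by maximising over the complement). Since $L$ is the union of its edges, guarding all of $L$ is the same as guarding every edge $\overline{p_ip_{i+1}}$ (a consecutive pair in $\prebox{succ}$); the role of $x$-monotonicity is more subtle and enters below, where it lets me reduce ``$x$ sees the whole edge from above'' to a single orientation test against the endpoint nearer to $x$.

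First I would recover the left-to-right order in MSO. The successor relation is a Hamiltonian path on $P$, so its strict transitive closure $\prebox{before}(u,v)$ is MSO-definable by the standard reachability schema (every $\prebox{succ}$-closed set containing $u$ contains $v$, and $u\neq v$). Using $\prebox{before}$ I can express ``$z$ lies strictly between an edge and $x$'' and ``$z$ is left, resp.\ right, of $x$''. The heart of $\prebox{seguard}(X)$ then asserts that for every consecutive pair $(p_i,p_{i+1})$ there is a guard $x\in X$ with
$$
\big(x=p_i\,\vee\,x=p_{i+1}\,\vee\,\ordtype(x,p_i,p_{i+1})\big)\>\wedge\>\prebox{clear}(x,p_i,p_{i+1}),
$$
where the first disjunction says that $x$ is an endpoint or lies strictly above the supporting line of the edge ($\ordtype(x,p_i,p_{i+1})$ detects ``above'' uniformly on both sides, as the edge is oriented $p_i\to p_{i+1}$ by $\prebox{succ}$). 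The predicate $\prebox{clear}$ rules out occluding vertices and splits by the side of $x$: if $x$ is right of the edge, then every vertex $z$ with $\prebox{before}(p_{i+1},z)\wedge\prebox{before}(z,x)$ must be non-clockwise from the ray $x\to p_{i+1}$, i.e.\ $\neg\ordtype(p_{i+1},x,z)$; symmetrically, if $x$ is left of the edge, every intermediate $z$ must be non-counter-clockwise from $x\to p_i$, i.e.\ $\neg\ordtype(x,p_i,z)$. In each case the tested endpoint ($p_{i+1}$, resp.\ $p_i$) is the one nearer to $x$.

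The point I would verify carefully is that testing only the \emph{near} endpoint is equivalent to full-edge visibility. Fix a guard $x$ right of the edge and above its supporting line, and look at the triangle $x\,p_i\,p_{i+1}$: over any $x$-coordinate between $p_{i+1}$ and $x$, its lower boundary is the line of sight to the nearer endpoint $p_{i+1}$ (a one-line computation shows that on this range the near-endpoint line drops faster than the far-endpoint line). Hence an intermediate vertex obstructs the view of \emph{some} point of the edge if and only if it rises above the near-endpoint line, so the single orientation test against $p_{i+1}$ captures visibility of the whole edge exactly; the collinear (grazing) case is admitted, matching ``never strictly below $L$''. Because $L$ is a polyline, a line of sight dips below $L$ only if it does so at a vertex, so quantifying $z$ over vertices loses nothing. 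The left-hand case is the mirror image.

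The main obstacle is precisely this geometric lemma together with its faithful MSO encoding: one must assign the near/far roles correctly on each side, get the orientation of the blocker test right (so that ``non-blocking'' really is $\neg\ordtype(p_{i+1},x,z)$ and not its negation), and check that the degenerate collinear and endpoint cases are absorbed by the explicit disjuncts rather than creating spurious (in)visibilities. Once $\prebox{seguard}(X)$ is established, invoking the optimisation form of Theorem~\ref{thm:MSOlin} on the bounded-clique-width input structure finishes the proof.
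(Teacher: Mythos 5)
Your proposal is correct and follows essentially the same route as the paper: define the left-to-right order as the MSO transitive closure of the successor relation, express ``edge $\overline{p_ip_{i+1}}$ is seen from above by $x\in X\subseteq P_1$'' via $\ordtype(x,p_i,p_{i+1})$ plus a no-blocking test against the near endpoint for intermediate vertices ($\neg\ordtype(p_{i+1},x,z)$ on the right, $\neg\ordtype(x,p_i,z)$ on the left), and invoke the optimisation form of Theorem~\ref{thm:MSOlin}. Your explicit justification of the near-endpoint reduction and the endpoint/collinear disjuncts only make explicit what the paper leaves to the remark that $L$ is $x$-monotone.
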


\begin{proof}
Formally, we view $L$ as a relational structure on the ground set $P$,
over a vocabulary with three relations;
ternary $\ordtype\subseteq P^3$, binary $\prebox{terredge}\subseteq P^2$
and unary $\prebox{canguard}\subseteq P$.
As before, $\ordtype=\Omega(P)$ is interpreted as the order type of $P$,
and $\prebox{terredge}$ is interpreted as the successive pairs in
the left-to-right ordering of points of $P$ on~$L$.
Let $\prebox{terredge}^*$ denote the transitive closure of the successor
relation $\prebox{terredge}$, which is easily expressed in MSO logic.
Lastly, unary $\prebox{canguard}$ is interpreted as the given subset
$P_1\subseteq P$ of allowable guard vertices.

We construct a formula $\prebox{seguard}(X)$ stating that every segment
$\overline{yy'}$ of
the terrain $L$ is seen by a single vertex guard $x$ of $X$, as follows
$$\> \forall y,y' \big[ \prebox{terredge}(y,y') \to
 \exists x\in X \big( \ordtype(x,y,y') \wedge \prebox{nopeak}(x,y,y') \big)\big]. $$
The part $\ordtype(x,y,y')$ certifies that the segment $\overline{yy'}$ of
the polygonal line $L$ is (potentially) seen ``from above'' by a guard at~$x$.
The formula $\prebox{nopeak}(x,y,y')$ then states that no part
(``peak''~$z$) of the terrain $L$ between a guard at $x$
and the segment $\overline{yy'}$ blocks the visibility,
written as
\begin{eqnarray*}\forall z \big[
 \big(\prebox{terredge}(y,y')\wedge
   \prebox{terredge}\nolimits^*(y',z)\wedge\prebox{terredge}\nolimits^*(z,x)\big) \to
 \neg\ordtype(y',x,z) \big]
\\ \wedge
\> \forall z \big[
 \big(\prebox{terredge}(y,y')\wedge
   \prebox{terredge}\nolimits^*(x,z)\wedge\prebox{terredge}\nolimits^*(z,y)\big) \to
 \neg\ordtype(x,y,z) \big].
\end{eqnarray*}

% \begin{itemize}
% \item We state that every segment $\overline{yy'}$ of the polygonal line $L$
% can be potentially seen ``from above'' by some guard;
% $$\> \forall y,y' \big[ \prebox{terredge}(y,y') \to
%  \exists x\in X \ordtype(x,y,y') \big]. $$
% \item We also state that no part (``peak''~$z$) of the terrain $L$ between a guard $x$
% and the segment $yy'$ blocks the visibility (this check is enough thanks to
% $x$-monotonicity of~$L$); written as
% $$\> \forall y,y' \,\forall z \big[
%  \big(\prebox{terredge}(y,y')\wedge
%    \prebox{terredge}\nolimits^*(y',z)\wedge\prebox{terredge}\nolimits^*(z,x)\big) \to
%  \neg\ordtype(y',x,z) \big] $$
% \vspace*{-4ex}%
% $$\> \forall y,y' \,\forall z \big[
%  \big(\prebox{terredge}(y,y')\wedge
%    \prebox{terredge}\nolimits^*(x,z)\wedge\prebox{terredge}\nolimits^*(z,y)\big) \to
%  \neg\ordtype(x,y,z) \big]. $$
% \end{itemize}
We then formulate $\prebox{seguard}(X) \wedge \forall x\in X
 \prebox{canguard}(x)$ to say that a guard set $X$ is a valid solution.
% The formula $\prebox{seguard}(X)$ is a conjunction of the previous and
% of $\forall x\in X \prebox{canguard}(x)$.
The optimization version (finding min-cardinaluty~$X$)
of Theorem~\ref{thm:MSOlin} finishes the proof.
\qed\end{proof}

\medskip

\paragraph*{Note on the Art gallery problem. }
We also briefly explain, why our approach to solving terrain guarding
in Theorem~\ref{thm:terraing} cannot be directly extended to
the traditional and more general {\em Art gallery} (guarding) problem
\cite{DBLP:journals/tit/LeeL86}, not even in the adjusted case when each
edge of the polygon is seen by a single vertex guard.

When ``guarding a terrain'', the definition requires the guard set to see
all points of the polygonal line defining that terrain
(and then it comes for free that the guards also see every point ``above''
the terrain).
On the other hand, the usual definition of the Art gallery problem requires
the guards to see every interior point of the polygon in addition to all its
boundary points.
This aspect is important, since observing all boundary points does not
exclude possible presence of ``blind spots'' in the interior of the polygon.

With a simple example in Figure~\ref{fig:Artgbad}, 
we show that, even in our adjusted setting in which every boundary edge must
be seen by a single guard, we cannot determine whether the interior of the
polygon is guarded knowing just the order type $\Omega(P)$.

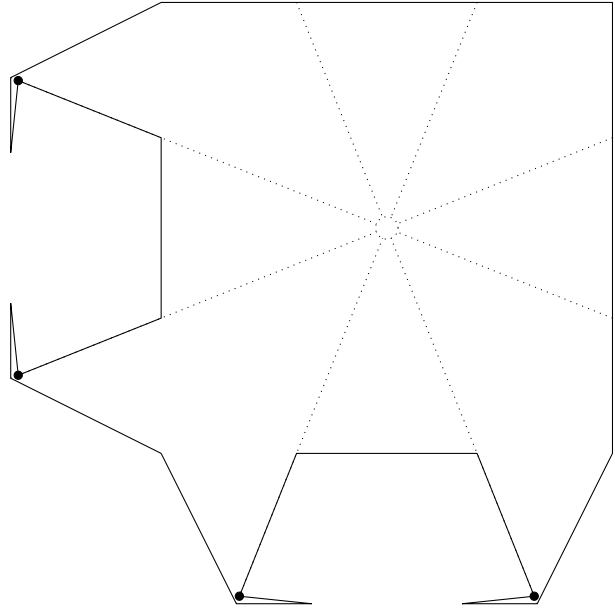
\begin{figure}[t]
$$
\begin{tikzpicture}[scale=1]
\tikzstyle{every node}=[draw, shape=circle, inner sep=1.1pt, color=black, fill=white]
\draw (0,0) -- (1,-2) -- (2,-2) -- (1.04,-1.9) -- (1.8,0) -- (4.2,0) 
    -- (4.96,-1.9) -- (4,-2) -- (5,-2) -- (6,0)
    -- (6,6) -- (0,6) -- (-2,5) -- (-2,4) -- (-1.9,4.96)
    -- (0,4.2) -- (0,1.8) -- (-1.9,1.04) -- (-2,2) -- (-2,1) -- (0,0);
\draw[dotted] (1,-2) -- (4.2,6);
\draw[dotted] (5,-2) -- (1.8,6);
\draw[dotted] (-2,1) -- (6,4.2);
\draw[dotted] (-2,5) -- (6,1.8);
\node[dotted, inner sep=3pt] at (3,3) {};
\node[fill=black] at (1.04,-1.9) {};
\node[fill=black] at (4.96,-1.9) {};
\node[fill=black] at (-1.9,1.04) {};
\node[fill=black] at (-1.9,4.96) {};
\end{tikzpicture}
$$
\caption{Guarding an art gallery: the four black vertex guards see all edges
of the depicted polygon (each edge seen by a single guard), and $4$ is
the obvious necessary minimum.
However, knowing only the order type of the polygon vertices, we cannot say
whether there is a ``blind spot'' in the middle of the polygon
(the depicted dotted spot) or not.
}
\label{fig:Artgbad}
\end{figure}

\medskip

\begin{theorem}[\ref{thm:ivisibility}]\label{thm:ivisibilityA}
Assume a polygon $Q$ with vertex set $P$ given as a relational structure
consisting of the order type $\Omega(P)$ and the counter-clockwise
Hamiltonian cycle of the edges of $Q$.
Then the visibility graph of $Q$ has an MSO interpretation in $Q$.
\end{theorem}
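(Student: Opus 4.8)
The plan is to exhibit a single MSO formula $\prebox{vis}(u,v)$ over the signature of $Q$ — the ternary $\ordtype$ together with the binary successor relation $\prebox{next}$ of the counter-clockwise boundary cycle, where $\prebox{next}(x,y)$ means that $(x,y)$ is a directed boundary edge — such that $Q\models\prebox{vis}(u,v)$ holds exactly for the mutually visible pairs. The interpretation $\Psi$ then consists of the single formula $\psi_E(u,v)\equiv\prebox{vis}(u,v)$ on the same ground set $P$, which is what the statement asks for, the visibility graph being a structure with one binary relation. The guiding geometric fact is that, in a simple polygon traversed counter-clockwise (so the interior lies on the left), two vertices $u\neq v$ see each other if and only if either they are adjacent on the boundary, or the open segment $\overline{uv}$ enters the interior at each endpoint and meets no boundary edge in its relative interior.

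First I would capture the local condition. For a vertex $u$ with boundary predecessor $p$ (so $\prebox{next}(p,u)$) and successor $s$ (so $\prebox{next}(u,s)$), the ray $\vec{uv}$ from $u$ towards $v$ points into the interior exactly when $v$ lies in the interior wedge spanned by the two incident edges. Writing left-turn tests as orientations, this is
$$\prebox{inwedge}(p,u,s,v)\equiv\big[\ordtype(p,u,s)\wedge\ordtype(p,u,v)\wedge\ordtype(u,s,v)\big]\vee\big[\neg\ordtype(p,u,s)\wedge\big(\ordtype(p,u,v)\vee\ordtype(u,s,v)\big)\big],$$
where the first disjunct is the convex case (interior angle $<\pi$, the intersection of the half-planes ``left of $pu$'' and ``left of $us$'') and the second is the reflex case (interior angle $>\pi$, their union). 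Then
$$\prebox{localint}(u,v)\equiv\exists p,s\,\big[\prebox{next}(p,u)\wedge\prebox{next}(u,s)\wedge\prebox{inwedge}(p,u,s,v)\big].$$

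For the global condition I would reuse the segment-crossing predicate $\prebox{ecross}(x,x',y,y')$ (as defined in the proof of Theorem~\ref{thm:convparti}), expressing that $\overline{uv}$ properly crosses no boundary edge:
$$\prebox{nocross}(u,v)\equiv\forall a,b\,\big[\prebox{next}(a,b)\to\neg\,\prebox{ecross}(u,v,a,b)\big].$$
Edges incident to $u$ or $v$ never trigger $\prebox{ecross}$, as it demands a crossing interior to both segments, so they need no special treatment. I then assemble
$$\prebox{vis}(u,v)\equiv u\neq v\wedge\big[\prebox{next}(u,v)\vee\prebox{next}(v,u)\vee\big(\prebox{localint}(u,v)\wedge\prebox{localint}(v,u)\wedge\prebox{nocross}(u,v)\big)\big],$$
the first two disjuncts recording that polygon edges are always visibility edges, where the wedge test would otherwise degenerate.

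The heart of the argument is the correctness of this characterization, and this is where I expect the main difficulty. Assuming the polygon is in general position (no three vertices collinear), the open segment $\overline{uv}$ meets the boundary only in transversal edge crossings; hence if $\prebox{nocross}(u,v)$ holds then $\overline{uv}$ avoids the boundary entirely and, being connected, lies in a single region of the plane minus the boundary — inside or outside the polygon by the Jordan curve theorem. Which region is decided by the direction at $u$, that is by $\prebox{localint}(u,v)$, and this in turn forces the symmetric condition at $v$, so requiring both is consistent. The delicate points I would treat carefully are: verifying the reflex branch of $\prebox{inwedge}$ (the union, rather than the intersection, of the two half-planes), and relaxing the general-position assumption — collinear triples allow $\overline{uv}$ to pass through a third vertex or run along an edge, which I would handle by augmenting $\prebox{nocross}$ with a clause forbidding a third vertex in the relative interior of $\overline{uv}$ and by a local orientation analysis at such grazing vertices. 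The MSO-expressibility itself is then immediate, since every clause above is a first-order Boolean combination of $\ordtype$ and $\prebox{next}$ atoms.
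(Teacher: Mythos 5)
Your proposal takes essentially the same route as the paper's proof: both define visibility by a first-order formula combining a local wedge test at each endpoint (via the incoming and outgoing boundary edges) with a global condition that no boundary edge strictly crosses the segment, using the same $\prebox{ecross}$ predicate, and both justify correctness by the connectivity/Jordan-curve argument. If anything you are more careful than the paper, whose displayed wedge condition is a plain conjunction of two half-plane tests and, as literally written, mishandles reflex vertices --- exactly the case you cover with the convex/reflex split in $\prebox{inwedge}$ and the explicit disjuncts for boundary edges.
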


\begin{proof}
Formally, we again view $Q$ as a relational structure on the ground set~$P$,
over a vocabulary with two relations;
ternary $\ordtype=\Omega(P)$ and binary $\prebox{poledge}\subseteq P^2$.
The relation $\prebox{poledge}$ is interpreted as the edge relation of
the counter-clockwise directed cycle of the polygonal edges of $Q$.
Recall that the visibility graph of $Q$ has the same vertex set as $Q$ 
and the edges are those line segments with ends in the vertices of $Q$ 
which are disjoint from the complement of~$Q$.

Our aim is to define an MSO formula $\varepsilon(x,y)$ such that
$Q\models\varepsilon(x,y)$ if and only if $\{x,y\}$ is an edge of the
visibility graph of $Q$.
For that we recall the MSO formula $\prebox{ecross}(x,x',y,y')$ from the
proof of Theorem~\ref{thm:convpartiA} above.
Defining $\varepsilon$ is then a routine exercise using the following facts:
\begin{itemize}
\item Consider a vertex $x\in P$ such that the incoming polygon edge to $x$
is $(x_1,x)$ and the outgoing one is $(x,x_2)$.
Then for any $\{x,y\}$ to be a visibility edge of $Q$, the point $y$ must
lie in the clockwise orientation (non-strictly) from $(x,x_1)$ to $(x,x_2)$.
Written in MSO; $\exists x_1,x_2\big( \prebox{poledge}(x_1,x)\wedge
 \prebox{poledge}(x,x_2)\wedge \neg\ordtype(x,x_1,y)\wedge
 \neg\ordtype(x,y,x_2)\big)$.
\item 
For $\{x,y\}$ to be a visibility edge of $Q$, no polygon edge of $Q$
may strictly cross the line segment $xy$.
That is; $\forall z_1,z_2\big( \prebox{poledge}(z_1,z_2)
 \to \neg\prebox{ecross}(x,y,z_1,z_2)\big)$.
\item 
Conversely, if the previous two conditions are satisfied for each of $x$ and
$y$, then $\{x,y\}$ is a visibility edge of $Q$.
\qed\end{itemize}
\end{proof}

\end{document}